\documentclass[11pt, a4paper]{article}

\usepackage[margin=2.5cm]{geometry} 
\usepackage[onehalfspacing]{setspace}
\usepackage[utf8]{inputenc}
\usepackage[T1]{fontenc}

\usepackage{amsmath, amsthm, amssymb}
\usepackage{mathrsfs}
\usepackage{mathtools}
\mathtoolsset{showonlyrefs=true}

\usepackage{titlesec}
\titleformat*{\section}{\large\bfseries}
\titleformat*{\subsection}{\normalsize\bfseries}

\titleformat{\paragraph}[runin]
  {\normalfont\slshape}
  {\theparagraph}
  {1em}
  {}
  [. ]

\usepackage{caption}
\usepackage{subcaption}
\usepackage{booktabs}
\usepackage{threeparttable}
\usepackage{siunitx}
\usepackage{multirow}
\usepackage{tabularx}
\usepackage{etoolbox}
\robustify\bfseries

\usepackage[round,authoryear]{natbib}

\usepackage[colorlinks=true,allcolors=blue]{hyperref}

\usepackage{lineno}

\usepackage{authblk}

\theoremstyle{plain}
\newtheorem{proposition}{Proposition}
\theoremstyle{definition}
\newtheorem*{definition}{Definition}
\newtheorem{lemma}{Lemma}

\title{\large \textbf{
Computing Endogenous Transformations in Processing Networks: \\A 
Dynamic Calibration Approach for Supply Chain Resilience
}}
\author[a]{Satoshi Nakano\thanks{Email: nakano@n-fukushi.ac.jp}}
\author[b]{Kazuhiko Nishimura\thanks{Email: nishimura@lets.chukyo-u.ac.jp (Corresponding Author)}}
\affil[a]{Faculty of Economics, Nihon Fukushi University, Tokai 477-0031 Japan}
\affil[b]{Institute of Economics, Chukyo University, Nagoya 466-8666 Japan}

\date{\small \today} 

\begin{document}

\maketitle


\begin{abstract}
Understanding how global supply chains endogenously transform in response to disruptions requires a parametric model of processing networks with non-neutral substitution elasticities. While cascaded constant elasticity of substitution production functions provide a rigorous analytical framework, dynamically calibrating their structural parameters from time-series data constitutes a highly non-convex inverse optimization problem. Since enforcing strict microeconomic concavity renders standard monolithic algorithmic approaches computationally intractable, we propose a novel data-driven, structure-exploiting algorithm to bypass this limitation. By leveraging the physical upstreamness topology of the supply chain, our hybrid heuristic effectively breaks the curse of dimensionality inherent in economywide processing networks. Applying this computational framework to United States time-series data, we successfully calibrate the fundamental heterogeneous elasticities. This provides a scalable analytics engine to fully endogenize complex supply chain transformations, thereby offering a practical tool to measure structural productivity and uncover the elastic origins of systemic tail risks in large-scale production networks.
\end{abstract}
\textbf{Keywords:} Operations Analytics, Supply Chain Resilience, Nonlinear Programming, Dynamic Calibration, Processing Networks


\section{Introduction}
\label{sec:introduction}
In the modern global economy, understanding and predicting how supply chains transform in response to exogenous shocks---such as stringent environmental regulations, natural disasters, or rapid technological shifts---is a problem of paramount importance for operations management, policy evaluation, and strategic business planning. As global supply chains face unprecedented volatility, evaluating their \textit{resilience}---the capacity to absorb shocks and endogenously reorganize without systemic collapse---has become a central challenge in Management Science. To accurately analyze these complex dynamics, it is essential to distinguish between a \textit{supply chain} and a \textit{processing network}. In this paper, we define a \textit{supply chain} as the empirically observed snapshot of adopted technologies and input-output linkages at a specific point in time. In contrast, we define a \textit{processing network} as the underlying theoretical engineering system that encompasses the complete hierarchical structure and all potential technological substitutions available to decision-makers. 
Ultimately, analyzing and predicting the endogenous transformation of a supply chain necessitates the rigorous modeling of such an underlying processing network.

The epistemological foundation of this approach traces back to the pioneering vision of \cite{Chenery1949}. By breaking down production into elemental physical processes described by engineering variables, Chenery demonstrated that constructing ``engineering production functions'' can overcome the limitations of conventional models that treat technology as an abstract statistical black box. Building upon this philosophy, an advanced processing network is capable of describing how an exogenous cost increase in a specific primary factor propagates through relative prices, prompting sectors to endogenously substitute inputs and ultimately reconstruct the observable supply chain snapshot.

Traditionally, the measurement of productivity and the structural analysis of supply chains have relied heavily on the costly compilation of time-series Input-Output (IO) tables, such as the KLEMS Database or the US benchmark IO accounts \citep{Jorgenson1987, Timmer2015}. 
While non-parametric index approaches, such as the T\"ornqvist index, serve as excellent ex-post accounting tools---effectively acting as historical ``dictionaries'' of structural changes---they inherently lack predictive capabilities. 
They can record how a supply chain transformed, but they cannot predict how it will reconstruct itself when faced with unprecedented counterfactual shocks.
On the other end of the spectrum, standard computable equilibrium models designed for ex-ante policy simulations typically calibrate their baseline to a single year's data and rely on static, literature-based substitution elasticities \citep{Dawkins2001, Dixon2013}. Recent multi-sector growth models \citep{Gaggl2026} have made significant strides by dynamically calibrating substitution elasticities using time-series data. Yet, these models are forced to heavily aggregate the input-output linkages---typically reducing the entire intermediate production network into a simplistic binary of goods and services. Consequently, a critical limitation remains: they can only capture structural transformations at a coarse, macroscopic level, failing to resolve how price feedbacks endogenously reorganize the intricate, multi-layered linkages of the full production network.

This coarse aggregation is problematic because recent advancements in supply chain analytics have established that the granular property \citep{Gabaix2011} of production networks is essential for understanding systemic vulnerabilities. 
While \cite{Acemoglu2012} established the fundamental role of network heterogeneity in propagating idiosyncratic shocks, \cite{Acemoglu2017} demonstrated that such a static network structure can generate aggregate tail risks only if the microeconomic shocks themselves possess heavy tails. 
In contrast, \cite{BaqaeeFarhi2019_2} theoretically and \cite{NakanoNishimura2024} empirically revealed that the true origins of such asymmetric tail risks lie in the network's non-linearities: non-unitary (non-neutral) substitution elasticities and production complementarities can endogenously create synergism among disruptions \citep{NakanoNishimura2026}.
If the substitution elasticity is strictly unity (a neutral Cobb-Douglas assumption), the cost shares remain mathematically invariant, meaning no endogenous supply-chain transformation occurs.

Building on these insights, frontier research strongly highlights the necessity of moving beyond these static, neutral frameworks to capture endogenous network dynamics. For instance, \cite{Kopytov2024} show that supply chains endogenously reorganize in response to uncertainty. Furthermore, \cite{LiuTsyvinski2023} emphasize that dynamic adjustments within these networks are strictly governed by their hierarchical topology; due to adjustment costs, shocks to upstream sectors cause disproportionately prolonged systemic damage as they propagate down long supply chains.
These insights underscore the critical need for an operations analytics framework like our CCES model, which explicitly embeds both an engineering hierarchy (from upstream to downstream) and non-neutral elasticities, allowing us to dynamically capture the true structural resilience and transformation of the network.

Parallel to these macroeconomic advancements, recent literature in 
MS/OR
emphasizes that managing such structural transformations requires viewing supply chains through the lens of \textit{viability} and \textit{resilience} \citep{Ivanov2024, Zhan2025}. 
As local disruptions increasingly trigger systemic ``ripple effects'' across global networks \citep{Li2021}, supply chain resilience is no longer merely about static buffering; it acts akin to an immune system, where structural redundancy and process flexibility determine the network's survivability against extreme shocks \citep{Ivanov2024}. 
Furthermore, recognizing these systemic vulnerabilities often necessitates proactive policy interventions or capacity regulations to safeguard critical supply chain flows \citep{Pazoki2024}. 
Therefore, evaluating the true resilience of an economywide network demands a practical operations analytics framework that can accurately quantify its endogenous substitution capabilities.

To rigorously capture the aforementioned endogenous supply-chain reorganizations in accordance with actual engineering realities, \cite{NakanoNishimura2018, NakanoNishimura2021} introduced Cascaded CES (CCES) production functions. 
Rather than treating the production system as a black box, this approach formulates \textit{nonneutroelastic processing networks}---a comprehensive framework where heterogeneous, non-neutral substitution elasticities govern structural transformations at each specific stage of production. 
The engineering hierarchy of this processing production function, wherein inputs are sequentially aggregated from the primary factor at the uppermost stream down to the final output, is illustrated in Figure \ref{fig:cces_structure}.

\begin{figure}[t!]
\centering
{\bf \footnotesize
{\unitlength 0.1in%
\begin{picture}(44.5000,17.0300)(27.5000,-33.3000)%
\put(38.0000,-17.0000){\makebox(0,0){${\ell}$}}%
\put(38.0200,-32.0300){\makebox(0,0){$x_1$}}%
\put(48.5200,-34.0300){\makebox(0,0){ }}%
%
\special{pn 8}%
\special{pa 4000 1700}%
\special{pa 4200 1700}%
\special{fp}%
%
\special{pn 8}%
\special{pa 4000 1850}%
\special{pa 4355 1850}%
\special{fp}%
%
\special{pn 8}%
\special{pa 4000 2000}%
\special{pa 4500 2000}%
\special{fp}%
%
\special{pn 8}%
\special{pa 4000 2150}%
\special{pa 4650 2150}%
\special{fp}%
%
\special{pn 8}%
\special{pa 4000 2300}%
\special{pa 4800 2300}%
\special{fp}%
%
\special{pn 8}%
\special{pa 4000 2450}%
\special{pa 4950 2450}%
\special{fp}%
%
\special{pn 8}%
\special{pa 4000 2600}%
\special{pa 5100 2600}%
\special{fp}%
%
\special{pn 8}%
\special{pa 4000 2750}%
\special{pa 5250 2750}%
\special{fp}%
%
\special{pn 8}%
\special{pa 4000 2900}%
\special{pa 5400 2900}%
\special{fp}%
\special{pa 4000 3050}%
\special{pa 5550 3050}%
\special{fp}%
%
\special{pn 8}%
\special{pa 4200 1700}%
\special{pa 4200 1850}%
\special{fp}%
\special{sh 1}%
\special{pa 4200 1850}%
\special{pa 4220 1783}%
\special{pa 4200 1797}%
\special{pa 4180 1783}%
\special{pa 4200 1850}%
\special{fp}%
%
\special{pn 8}%
\special{pa 5550 3050}%
\special{pa 5550 3200}%
\special{fp}%
\special{sh 1}%
\special{pa 5550 3200}%
\special{pa 5570 3133}%
\special{pa 5550 3147}%
\special{pa 5530 3133}%
\special{pa 5550 3200}%
\special{fp}%
%
\special{pn 8}%
\special{pa 5400 2900}%
\special{pa 5400 3050}%
\special{fp}%
\special{sh 1}%
\special{pa 5400 3050}%
\special{pa 5420 2983}%
\special{pa 5400 2997}%
\special{pa 5380 2983}%
\special{pa 5400 3050}%
\special{fp}%
%
\special{pn 8}%
\special{pa 5250 2750}%
\special{pa 5250 2900}%
\special{fp}%
\special{sh 1}%
\special{pa 5250 2900}%
\special{pa 5270 2833}%
\special{pa 5250 2847}%
\special{pa 5230 2833}%
\special{pa 5250 2900}%
\special{fp}%
%
\special{pn 8}%
\special{pa 5100 2600}%
\special{pa 5100 2750}%
\special{fp}%
\special{sh 1}%
\special{pa 5100 2750}%
\special{pa 5120 2683}%
\special{pa 5100 2697}%
\special{pa 5080 2683}%
\special{pa 5100 2750}%
\special{fp}%
%
\special{pn 8}%
\special{pa 4950 2450}%
\special{pa 4950 2600}%
\special{fp}%
\special{sh 1}%
\special{pa 4950 2600}%
\special{pa 4970 2533}%
\special{pa 4950 2547}%
\special{pa 4930 2533}%
\special{pa 4950 2600}%
\special{fp}%
%
\special{pn 8}%
\special{pa 4800 2300}%
\special{pa 4800 2450}%
\special{fp}%
\special{sh 1}%
\special{pa 4800 2450}%
\special{pa 4820 2383}%
\special{pa 4800 2397}%
\special{pa 4780 2383}%
\special{pa 4800 2450}%
\special{fp}%
%
\special{pn 8}%
\special{pa 4650 2150}%
\special{pa 4650 2300}%
\special{fp}%
\special{sh 1}%
\special{pa 4650 2300}%
\special{pa 4670 2233}%
\special{pa 4650 2247}%
\special{pa 4630 2233}%
\special{pa 4650 2300}%
\special{fp}%
%
\special{pn 8}%
\special{pa 4500 2000}%
\special{pa 4500 2150}%
\special{fp}%
\special{sh 1}%
\special{pa 4500 2150}%
\special{pa 4520 2083}%
\special{pa 4500 2097}%
\special{pa 4480 2083}%
\special{pa 4500 2150}%
\special{fp}%
%
\special{pn 8}%
\special{pa 4350 1850}%
\special{pa 4350 2000}%
\special{fp}%
\special{sh 1}%
\special{pa 4350 2000}%
\special{pa 4370 1933}%
\special{pa 4350 1947}%
\special{pa 4330 1933}%
\special{pa 4350 2000}%
\special{fp}%
%
\special{pn 8}%
\special{pa 4000 3200}%
\special{pa 6600 3200}%
\special{fp}%
\special{sh 1}%
\special{pa 6600 3200}%
\special{pa 6533 3180}%
\special{pa 6547 3200}%
\special{pa 6533 3220}%
\special{pa 6600 3200}%
\special{fp}%
\put(51.0000,-23.5000){\rotatebox{-46.6366}{\makebox(0,0){CCES aggregation}}}%
\put(38.0700,-18.5300){\makebox(0,0){${x}_{n}$}}%
%
\special{pn 8}%
\special{pa 6100 2700}%
\special{pa 6100 3200}%
\special{fp}%
\special{sh 1}%
\special{pa 6100 3200}%
\special{pa 6120 3133}%
\special{pa 6100 3147}%
\special{pa 6080 3133}%
\special{pa 6100 3200}%
\special{fp}%
\put(61.0000,-26.0000){\makebox(0,0){Innovation ($z$)}}%
\put(66.9000,-32.6000){\makebox(0,0)[lb]{$y$ (output)}}%
\put(38.0200,-30.5300){\makebox(0,0){$x_2$}}%
%
\special{pn 8}%
\special{pa 3800 2000}%
\special{pa 3800 2900}%
\special{dt 0.045}%
\put(34.0000,-32.0000){\makebox(0,0)[rb]{\textnormal{\textit{Downstream}}}}%
\put(34.0000,-17.0000){\makebox(0,0)[rt]{\textnormal{\textit{Upstream}}}}%
\end{picture}}%
}
\vspace{0.0em}
\caption{A schematic representation of the processing production function, which constitutes the underlying engineering structure of the Processing Networks.}
\label{fig:cces_structure}
\end{figure}

Yet, a glaring methodological gap remains for practical operations analytics: there is no established inverse optimization framework capable of dynamically identifying these heterogeneous elasticity parameters ($\Sigma$) from limited time-series data to empirically evaluate supply chain robustness. 
The fundamental computational bottleneck is that calibrating a large-scale CCES network against time-series data constitutes a highly non-linear and non-convex problem. When applying standard gradient-based solvers, the strong parameter interferences inherent in the nested CCES hierarchy quickly trap the algorithm in local optima or active-set boundaries. Attempting to navigate this high-dimensional parameter space by treating the economic system as a generic black box---without explicitly leveraging its underlying topological structure---inevitably leads to a computationally intractable curse of dimensionality.

To overcome this computational barrier and address practical business analytics challenges, this paper transitions the analysis of production networks from ex-post descriptive accounting to predictive structural simulations by proposing a ``Dynamic Calibration'' approach. The core methodological contribution of the paper is the development of a novel, deterministic \textit{structure-exploiting algorithm} that explicitly leverages the topological properties of the network. Rather than relying on stochastic random walks, our approach alternates between a ``vertical'' cascade-sequential descent that systematically neutralizes nested ill-conditioning from upstream to downstream, and a ``horizontal'' alternating block coordinate descent that optimizes systemic interference across all sectors. 

We apply our proposed methodology to United States time-series input-output tables, targeting a structurally stable calibration window to identify the network's fundamental elasticities. The results unfold in two stages. First, as a validation of the model's descriptive accuracy, we extract the fully structural Total Factor Productivity (TFP) derived from the CCES unit cost functions. We show that our calibrated parameters precisely recover the conventional ex-post T\"ornqvist TFP, successfully disentangling price-substitution mechanisms from genuine structural paradigm shifts. This demonstrates that our algorithm provides a practical tool to measure structural productivity without relying on unconstrained short-term regressions (which frequently violate concavity). 

Second, we demonstrate the predictive power of the calibrated network by feeding symmetric, normally distributed microeconomic productivity shocks into the system. Through Monte Carlo simulations, we show that the endogenized supply-chain transformations trigger a non-linear aggregation. Unlike a rigid linear economy which preserves the normality of shocks, our empirically calibrated CCES network endogenously generates distinct heavy-tailed asymmetric aggregate fluctuations. Our framework proves that nonneutroelastic processing networks, when computed via our structure-exploiting algorithm, are not merely theoretical constructs, but essential \textit{operations analytics engines}. They enable decision-makers to quantitatively assess whether a supply chain is elastic (resilient to shocks and capable of synergizing innovations) or inelastic (vulnerable to collapse and negative singularities), thereby uncovering the elastic origins of systemic tail risks.

The remainder of this paper is organized as follows. Section \ref{sec:processing_networks} formally defines the economywide processing networks and the theoretical structure of the CCES model. Section \ref{sec:methodology} details the Dynamic Calibration problem and introduces our structure-exploiting algorithm. Section \ref{sec:results} presents the empirical application to the US economy, validates the calibration through structural productivity comparisons, and evaluates systemic tail risks and resilience. 
Finally, Section \ref{sec:conclusion} concludes the paper.


\section{Processing Networks Framework}
\label{sec:processing_networks}

Processing networks provide an analytical bridge between the engineering reality of hierarchical supply chains and the economic depiction of market transactions. To translate this structural topology into an economically rigorous general equilibrium model, we must define an aggregate cost function that explicitly honors the sequential nature of material flows.


\subsection{Unit Cost Functions and Cost Share Matrix}
\label{subsec:cces_cost_shares}

To capture the sequential and hierarchical nature of cost propagation in processing networks, we employ the Cascaded CES (CCES) unit cost function under the assumption of constant returns to scale. For a given industry $j \in \{1, \dots, n\}$, let $i \in \{1, \dots, n\}$ denote the hierarchical level of intermediate inputs, ordered from the most downstream (final processing) input to the most upstream (raw extraction) input. The recursive structure of the CCES unit cost function is formalized as follows:\footnote{The underlying CCES production function that corresponds to this dual cost structure is detailed in Eq. \eqref{cces_pf}.}
\begin{align}
\pi_{i-1} = \left( \lambda_i (p_i)^{1 - \sigma_i} + (1 - \lambda_i) (\pi_i)^{1 - \sigma_i} \right)^{\frac{1}{1 - \sigma_i}}, \quad i = 1, \dots, n,
\label{eq:cces_recursive}
\end{align}
where $\lambda_i \in [0, 1]$ represents the distributional share parameter, $\sigma_i \ge 0$ dictates the elasticity of substitution at level $i$, and $p_i$ is the market price of the $i$-th factor input. The state variable $\pi_0$ represents the final aggregate unit cost at the bottommost level ($i=1$), which equates to the theoretical output price of industry $j$ before productivity adjustments. Conversely, $\pi_n = w$ represents the price of the primary factor (value-added) introduced at the uppermost foundational level ($i = n$).

By recursively unrolling Eq. \eqref{eq:cces_recursive}, the composite CCES unit cost function for industry $j$ can be expressed as a non-linear function of all input prices and the primary factor price:
\begin{align}
{\pi}_{0} = {C} \left( p_1, \dots, p_n, w \mid \lambda_1, \dots, \lambda_n ;\, \sigma_1, \dots, \sigma_n \right),
\qquad
p = z^{-1}{\pi}_{0}, 
\label{eq:ucf_compact}
\end{align}
where $z$ denotes the Hicks-neutral total factor productivity (TFP) level, and $p$ is the actual output price.

To evaluate the cost share of the $k$-th factor input, $s_k$, we apply Shephard's lemma. The explicitly hierarchical structure of the CCES function allows us to utilize the chain rule of differentiation. The partial derivatives of the recursive function are given by:
\begin{align}
\frac{\partial \pi_{i-1}}{\partial p_i} = \lambda_i \left( \frac{\pi_{i-1}}{p_i} \right)^{\sigma_i},
&&
\frac{\partial \pi_{i-1}}{\partial \pi_i} = (1 - \lambda_i) \left( \frac{\pi_{i-1}}{\pi_i} \right)^{\sigma_i}.
\end{align}
Consequently, the cost share $s_k$ for $k = 1, \dots, n$ is derived as:
\begin{align}
s_k 
= \frac{p_k}{\pi_0} \frac{\partial \pi_0}{\partial p_k} 
&= \frac{p_k}{\pi_0} \frac{\partial \pi_0}{\partial \pi_1} \cdots \frac{\partial \pi_{k-2}}{\partial \pi_{k-1}} \frac{\partial \pi_{k-1}}{\partial p_k} \nonumber \\
&= \frac{p_k}{\pi_0}  (1-\lambda_1)\left( \frac{\pi_{0} }{\pi_{1}} \right)^{\sigma_1} \cdots (1-\lambda_{k-1}) \left(\frac{\pi_{k-2} }{\pi_{k-1}} \right)^{\sigma_{k-1}} \lambda_{k}\left( \frac{\pi_{k-1}}{p_{k}} \right)^{\sigma_k}.
\label{eq:cost_share_k}
\end{align}
Equation \eqref{eq:cost_share_k} clearly illustrates how a price shock at level $k$ non-linearly cascades down the processing network to affect the final cost share. Because $s_k$ implicitly depends on the final unit cost $\pi_0$, the $i$-th factor cost share for industry $j$ is a function of the entire price vector and all structural parameters. Collecting these shares across all intermediate inputs $i = 1, \dots, n$ and value-added elements for industries $j = 1, \dots, n$, we define the economywide cost share matrix $\mathbf{S} \in \mathbb{R}^{(n+1) \times n}$:
\begin{align}
\mathbf{S}(\boldsymbol{p}, \boldsymbol{w} \mid {\Lambda}, {\Sigma}) = \left[ s_{ij} \left( p_1, \dots, p_n; w_j \mid \lambda_{1j}, \dots, \lambda_{nj}; \sigma_{1j}, \dots, \sigma_{nj} \right) \right],
\end{align}
where $\boldsymbol{p} = (p_1, \dots, p_n)^\prime$ is the intermediate price vector, $\boldsymbol{w} = (w_1, \dots, w_n)$ is the primary factor price vector, and ${\Lambda}$ and ${\Sigma}$ collect the corresponding share parameters and elasticities for all sectors. 

Correspondingly, the general equilibrium price is defined as the fixed point of the economywide system of price equations \eqref{eq:ucf_compact}, compactly described as:
\begin{align}
\boldsymbol{p} = \mathbf{Z}^{-1} \mathbf{C}(\boldsymbol{p}, \boldsymbol{w} \mid {\Lambda}, {\Sigma}),
\label{economy_price}
\end{align}
where $\mathbf{Z} = \text{diag}(z_1, \dots, z_n)$ is the diagonal matrix of sectoral productivities, and $\mathbf{C}$ maps the vector functions of unit costs.

A fundamental microeconomic requirement for any well-behaved unit cost function is that the sum of all factor cost shares (including the primary factor) equals unity. To formally prove this for our CCES structure, we define auxiliary variables:
\begin{align}
\mu_i = \lambda_i \left( \frac{\pi_{i-1}}{p_i} \right)^{\sigma_i - 1}, 
&&
\nu_i = (1 - \lambda_i) \left( \frac{\pi_{i-1}}{\pi_i} \right)^{\sigma_i - 1}.
\label{sector_price}
\end{align}
From Eq. \eqref{eq:cces_recursive}, it trivially holds that $\mu_i + \nu_i = 1$ for all $i = 1, \dots, n$. We hypothesize that the cumulative residual share follows the relationship:
\begin{align}
\prod_{i=1}^k \nu_i = 1 - \sum_{i=1}^k s_i.
\label{eq:induction_hypothesis}
\end{align} 
For $k=1$, applying Shephard's lemma yields $s_1 = \mu_1 = 1 - \nu_1$, satisfying the base case. Assuming Eq. \eqref{eq:induction_hypothesis} holds for $k$, we use Eq. \eqref{eq:cost_share_k} to observe that $s_{k+1} = \mu_{k+1} \prod_{i=1}^k \nu_i$. Thus:
\begin{align}
\nu_{k+1} = 1 - \mu_{k+1} = 1 - \frac{s_{k+1}}{\prod_{i=1}^k \nu_i} = \frac{1 - \sum_{i=1}^{k+1} s_i}{1 - \sum_{i=1}^k s_i}.
\end{align}
It subsequently follows that:
\begin{align}
\prod_{i=1}^{k+1} \nu_i 
= \nu_{k+1} \prod_{i=1}^k \nu_i 
= \frac{1 - \sum_{i=1}^{k+1} s_i}{1 - \sum_{i=1}^k s_i} 
\left( 1 - \sum_{i=1}^k s_i \right)
= 1 - \sum_{i=1}^{k+1} s_i.
\end{align}
By mathematical induction, Eq. \eqref{eq:induction_hypothesis} is valid for all $k$. Furthermore, the cost share of the uppermost primary factor (value-added), denoted $s_{\ell}$, is given by applying Shephard's lemma with respect to $w = \pi_n$:
\begin{align}
s_{\ell} 
= \frac{\pi_n}{\pi_0} \frac{\partial \pi_0}{\partial \pi_n} 
= \frac{\pi_n}{\pi_0} \frac{\partial \pi_0}{\partial \pi_1} \cdots \frac{\partial \pi_{n-1}}{\partial \pi_n} 
= \prod_{i=1}^n (1 - \lambda_i) \left( \frac{\pi_i}{\pi_{i-1}} \right)^{1-\sigma_i} 
= \prod_{i=1}^n \nu_i.
\label{eq:primary_share}
\end{align}
Combining Eq. \eqref{eq:induction_hypothesis} (evaluated at $k=n$) and Eq. \eqref{eq:primary_share} 
 confirms the requisite adding-up property:
\begin{align}
s_{\ell} = 1 - \sum_{i=1}^n s_i 
\end{align}


\subsection{Baseline Share Parameters}
\label{subsec:reference_state}

A critical challenge in parameterizing the CCES function is determining the sequence of the hierarchical levels $i = 1, \dots, n$ and mapping the baseline data to the unobservable share parameters ${\Lambda}$. In this framework, the hierarchical cascade is not arbitrarily assigned; it is objectively determined by a Leontief-based upstreamness metric, as formally defined in Eq. \eqref{eq:upstreamness} within Appendix \ref{AppendixB}. This metric ensures that intermediate inputs are strictly ordered from the most downstream (direct) inputs to the most upstream (indirect) inputs, naturally culminating in the foundational primary factor.

Once the cascade sequence is established, we calibrate the distributional parameters ${\Lambda}$ using the reference state observed in the baseline period ($t=0$). At this reference state, we normalize all observable market prices to unity (i.e., $p_i = 1$ and $w = 1$ for all $i$). According to the recursive structure outlined in Eq. \eqref{eq:cces_recursive}, this normalization immediately forces all intermediary composite unit costs to be unity as well ($\pi_i = 1$ for all $i$).

Let $\mathbf{A} = [a_{ij}]$ denote the $n \times n$ reference input-output coefficient matrix, where $a_k$ (dropping the industry index $j$ for brevity) represents the observed cost share of the $k$-th intermediate factor at the baseline. Evaluating Eq. \eqref{eq:cost_share_k} and the inductive property from Eq. \eqref{eq:induction_hypothesis} at the reference state ($\pi_i = 1$) yields:
\begin{align}
a_k = \lambda_k \prod_{i=1}^{k-1} (1 - \lambda_i), \qquad \text{and} \qquad \prod_{i=1}^k (1 - \lambda_i ) = 1 - \sum_{i=1}^k a_i.
\end{align}
This system allows us to uniquely identify the unobservable structural share parameter $\lambda_k$ solely from the observable baseline cost shares:
\begin{align}
\lambda_k = \frac{a_k}{\prod_{i=1}^{k-1} (1 - \lambda_i )} = \frac{a_k}{1 - \sum_{i=1}^{k-1} a_i },
\label{eq:lambda_determination}
\end{align}
where the primary factor share is given by the residual $a_{n+1} = 1 - \sum_{i=1}^{n} a_{i}$.

Equation \eqref{eq:lambda_determination} demonstrates that the entire share parameter matrix $\Lambda$ is strictly and deterministically identified by the baseline input-output matrix $\mathbf{A}$. Consequently, the time-varying cost share matrix is fundamentally anchored by $\mathbf{A}$ and dynamically driven solely by the unobserved elasticity matrix ${\Sigma}$:
\begin{align}
\mathbf{S}(\boldsymbol{p}, \boldsymbol{w} \mid {\Lambda}, {\Sigma}) \equiv 
\mathbf{S}(\boldsymbol{p}, \boldsymbol{w} \mid \mathbf{A}, {\Sigma}).
\end{align}


\subsection{The Dynamic Calibration (DC) Problem}
\label{subsec:dynamic_calibration}

The ultimate objective of processing networks is to predict how the supply-chain snapshot $\mathbf{S}$ transforms over time $t = 1, \dots, T$ in response to price fluctuations. Let $\boldsymbol{g}_t = ({g}_1(t), \dots, {g}_n(t))^\prime$ denote the vector of gross outputs, $\boldsymbol{d}_t = ({d}_1(t), \dots, {d}_n(t))^\prime$ the final demand, and $v_t \, (= \sum_{i=1}^n d_i(t))$ the total value added at time $t$, all measured in current value terms. 
From the fundamental material balance condition of the general equilibrium, the economy must satisfy the following accounting identity at any period $t$:
\begin{align}
\mathbf{S}(\boldsymbol{p}_t, \boldsymbol{w}_t \mid \mathbf{A}, {\Sigma}) \boldsymbol{g}_t = \boldsymbol{h}_t,
\label{eq:balance_equation}
\end{align}
where $\boldsymbol{h}_t = (\boldsymbol{g}_t - \boldsymbol{d}_t, v_t)^\prime \in \mathbb{R}^{n+1}$ represents the column vector collecting the intermediate demand sums and sectoral value added. For notational convenience, we define the model-predicted supply-chain mapping as $\hat{\boldsymbol{h}}_t({\Sigma}) \equiv \mathbf{S}\left( {\Sigma} \mid \boldsymbol{p}_t, \boldsymbol{w}_t, \mathbf{A} \right) \boldsymbol{g}_t$.

The Dynamic Calibration (DC) problem seeks to identify the fundamental elasticity matrix $\Sigma \in \mathbb{R}_+^{n \times n}$ that minimizes the discrepancy between the model-predicted transactions $\hat{\boldsymbol{h}}_t({\Sigma})$ and the historically observed realities $\boldsymbol{h}_t$ across the entire calibration window $t = 1, \dots, T$. As formally proven in Appendix \ref{AppendixA}, the non-negativity constraint ($\sigma_{ij} \ge 0$) is the \textit{necessary and sufficient} condition for the global concavity of the CCES unit cost functions, thereby guaranteeing the microeconomic regularity of the network. 

Formulating the objective function for economywide networks requires navigating a critical trade-off between macroeconomic aggregate consistency and microeconomic structural accuracy. A naive absolute-error least squares approach overwhelmingly prioritizes massive industries, severely distorting the substitution topology of smaller sectors. Conversely, relying solely on relative errors risks violating the aggregate macroeconomic accounting identities (e.g., the three-way equivalence of GDP). Furthermore, the highly non-linear system is intrinsically ill-posed when applied to noisy real-world data. To simultaneously resolve these issues, we formulate DC as a regularized dual-objective inverse optimization problem:
\begin{align}
\min_{\Sigma \ge 0}~ \mathscr{L}(\Sigma) = 
\sum_{t=1}^T \underbrace{\sum_{k=1}^{n+1} \left( \frac{\hat{h}_{k,t}({\Sigma}) - h_{k,t}}{\bar{h}_k} \right)^2}_{\text{Micro-structural relative error}} 
+ \alpha \sum_{t=1}^T \underbrace{\left( \frac{\mathbf{1}^\prime \hat{\boldsymbol{h}}_t({\Sigma}) - \mathbf{1}^\prime \boldsymbol{h}_t}{\mathbf{1}^\prime \bar{\boldsymbol{h}}} \right)^2}_{\text{Macro-aggregate GDP error}} 
+ \beta \underbrace{\vphantom{\left(\frac{\hat{h}}{\bar{h}}\right)^2} \| \Sigma - \mathbf{1}_{n \times n} \|_F^2}_{\text{Tikhonov regulator}}
\label{eq:inverse_optimization}
\end{align}
where $\bar{\boldsymbol{h}}$ is the time-averaged reality vector, $\mathbf{1}$ is a vector/matrix of ones, and $\alpha > 0$ is the weight enforcing macroeconomic aggregate consistency. The parameter $\beta > 0$ introduces a mild Tikhonov penalty that anchors the system to the Cobb-Douglas benchmark ($\sigma_{ij}=1$). This regularization ensures that the Allen-Uzawa Elasticity of Substitution (AUES) defaults to unity in the absence of strong empirical evidence to the contrary, effectively preventing economically implausible parameter explosions in an otherwise underdetermined subspace.

While the formulation in \eqref{eq:inverse_optimization} integrates micro and macro constraints, its computational execution is deceptively challenging for highly disaggregated processing networks (e.g., $n \ge 70$, involving thousands of parameters). Attempting to optimize $\mathscr{L}(\Sigma)$ as a monolithic block triggers a severe curse of dimensionality; evaluating the numerical Jacobian requires $\mathcal{O}(n^2)$ function calls per iteration, causing computational time to explode as $\mathcal{O}(n^4)$. Moreover, the gradient with respect to any $\sigma_{ij}$ involves highly non-convex terms due to the multi-layered recursive definitions embedded in the CCES structure. A change in a downstream elasticity non-linearly interferes with the gradient of an upstream elasticity, creating an extremely ill-conditioned, folded error manifold. When pushed against the non-negativity boundary ($\sigma_{ij} \ge 0$), generic gradient-based algorithms (e.g., L-BFGS-B) inevitably fall into active-set traps and prematurely halt at local minima with significant residuals.



\section{Algorithm for Dynamic Calibration}
\label{sec:methodology}

While standard econometric regressions (e.g., seemingly unrelated regressions on observed cost share matrices $\mathbf{A}_t$) are conventionally employed to estimate substitution elasticities, they face fatal limitations in macroeconomic processing networks. Estimating deep structural parameters using short-term panels frequently yields negative elasticities (e.g., \cite{NakanoNishimura2021}), severely violating the global concavity required for microeconomic regularity. Conversely, extending the time-series panel to gain statistical degrees of freedom relies on the empirically unrealistic assumption that fundamental substitution topologies remain invariant over decades. 

Given these parameter stability concerns, Dynamic Calibration (DC) emerges as a highly legitimate alternative. Crucially, rather than directly fitting the highly noisy, ex-post cost share actuals $\mathbf{A}_t$, DC is formulated to minimize discrepancies against the macroeconomic intermediate demand realities $\boldsymbol{h}_t$. By strictly enforcing economic regularity ($\Sigma \ge 0$) over a stable short-term window, DC acts as a robust constrained inverse optimization approach.

\subsection{The Failure of Monolithic Optimization}
To solve the regularized dual-objective problem defined in Eq. \eqref{eq:inverse_optimization}, one might naturally consider applying standard gradient-based algorithms (e.g., L-BFGS-B or Sequential Quadratic Programming). However, while such monolithic solvers occasionally succeed in small-scale pedagogical models (e.g., $n=10$), their computational execution scales disastrously, making them prohibitive for highly disaggregated, economywide networks (e.g., $n \ge 71$). Attempting to optimize all parameters simultaneously as a single block triggers three intrinsic mathematical complexities:
\begin{enumerate}
    \item \textsl{Curse of Dimensionality in Gradient Evaluation:} Monolithic optimization of an $n$-sector network requires navigating a parameter space of up to $\mathcal{O}(n^2)$ dimensions. Evaluating the numerical Jacobian at each iteration demands $\mathcal{O}(n^2)$ function evaluations, causing the computational burden to explode as $\mathcal{O}(n^4)$. In practical terms, calculating a single gradient step for an economywide network demands thousands of evaluations, paralyzing the solver.
    \item \textsl{Extreme Ill-conditioning of the CCES Hierarchy:} The multi-layered nesting of the CCES unit cost functions creates severe ill-conditioning. A marginal perturbation in an upstream elasticity parameter ($\sigma_n$) is exponentially amplified or dampened through the cascaded power functions before affecting the final cost shares. This extreme distortion causes the numerical Hessian matrix to degenerate, leading standard solvers to falsely declare convergence in flat plateaus.
    \item \textsl{The Active-Set Boundary Trap:} Searching exclusively within the economically meaningful domain ($\sigma_{ij} \ge 0$) frequently forces the monolithic solver against boundary constraints. Once trapped on these boundaries within a high-dimensional and non-convex folded manifold, generic algorithms struggle to escape, prematurely halting with significant residuals.
\end{enumerate}
Because of these interacting barriers, general-purpose solvers typically exhaust their evaluation limits before achieving meaningful error reduction. 

\subsection{The Hybrid Heuristic: Exploiting Network Topology}
To overcome this computational bottleneck, we introduce a deterministic \textit{structure-exploiting algorithm} (hybrid heuristic). Instead of treating the highly non-convex objective function $\mathscr{L}(\Sigma)$ as an $\mathcal{O}(n^2)$-dimensional black box, our algorithm leverages the physical topology of the processing network to structurally decompose the labyrinth into a sequence of low-dimensional, well-behaved subproblems. 

Our algorithm systematically alternates between two structural optimization modes:
\begin{itemize}
    \item \textsl{Vertical Cascade-Sequential Descent:} The algorithm explicitly exploits the hierarchical upstreamness of the network. Rather than optimizing the entire elasticity vector for industry $j$ simultaneously, the algorithm sequentially descends the cascade. It isolates and updates the parameters starting from the uppermost level. By evaluating the unit cost function recursively from top to bottom, the extreme ill-conditioning caused by nested interferences is systematically neutralized.
    \item \textsl{Horizontal Alternating Block Coordinate Descent:} While the vertical descent operates within a single industry $j$, the entire network is bound together horizontally by the macro-aggregate GDP constraint. Under the dual-objective formulation, a parameter update in one industry inevitably alters the economywide predicted transaction sum ($\mathbf{1}^\prime \hat{\boldsymbol{h}}_t$), thereby shifting the macro-error penalty for all other industries. To seamlessly handle this inter-sectoral interference without invoking monolithic evaluation, we employ an Alternating Block Coordinate Descent approach, updating the elasticity vector for one sector at a time.
\end{itemize}

\subsection{Algorithmic Implementation}
Let $k$ denote the global iteration (epoch) counter. Note that during DC, historical prices ($\boldsymbol{p}_t, \boldsymbol{w}_t$) are treated as exogenous observed data. The precise algorithmic flow to minimize $\mathscr{L}(\Sigma)$ is implemented as follows:

\vspace{0.2cm}
\noindent \textsl{Step 0: Initialization.} \\
Initialize the elasticity matrix at the Cobb-Douglas benchmark, $\Sigma^{(0)} = \mathbf{1}$. Set the epoch counter $k = 1$. Determine the hierarchical cascade sequence for each industry $j$ by sorting the inputs based on the upstreamness metric $(\mathbf{I}-\mathbf{A})^{-2}\mathbf{1}$.

\vspace{0.2cm}
\noindent \textsl{Step 1: Horizontal Alternating Block (Inter-Industry Loop).} \\
For each industry $j \in \{1, \dots, n\}$, we isolate its elasticity vector $\sigma_{\cdot j} \in \mathbb{R}_+^n$ while keeping the elasticities of all other industries ($\Sigma_{-j}$) fixed at their most recently updated values. The subproblem for sector $j$ is to minimize the partial objective:
\begin{align}
\min_{\sigma_{\cdot j} \ge 0} \mathscr{L} \left( \sigma_{\cdot j}, \Sigma_{-j} \right).
\end{align}
Instead of solving this subproblem simultaneously, we pass it to the vertical cascade descent.

\vspace{0.2cm}
\noindent \textsl{Step 2: Vertical Cascade-Sequential Descent (Intra-Industry Loop).} \\
Within industry $j$, let $i$ index the inputs ordered from the uppermost tier down to the bottommost tier. We optimize the scalar parameter $\sigma_{ij}$ one at a time, strictly descending the cascade:
\begin{align}
\sigma_{ij}^{(k)} = \arg\min_{\sigma_{ij} \ge 0} \mathscr{L} \left( \sigma_{ij} \mid \sigma_{<i, j}^{(k)}, \sigma_{>i, j}^{(k-1)}, \Sigma_{-j} \right),
\end{align}
where $\sigma_{<i, j}^{(k)}$ represents the parameters in the upstream layers already updated in the current epoch $k$, and $\sigma_{>i, j}^{(k-1)}$ represents the downstream parameters yet to be updated. This 1-dimensional bounded scalar optimization is solved efficiently using the L-BFGS-B algorithm.

\vspace{0.2cm}
\noindent \textsl{Step 3: Macro-Aggregate State Update.} \\
Once the vertical descent completes for industry $j$, its cost shares are accurately predicted by recursively calculating the CCES intermediary unit costs $\pi_i$ based strictly on the exogenous historical prices $\boldsymbol{p}_t$ (following Eq. \eqref{eq:cces_recursive}). The newly predicted transaction vector for sector $j$ is immediately reflected in the macroscopic environment by updating the economywide intermediate demand sum $\mathbf{1}^\prime \hat{\boldsymbol{h}}_t$. This provides the updated macro-aggregate penalty environment for the subsequent sector $j+1$.

\vspace{0.2cm}
\noindent \textsl{Step 4: Convergence Check.} \\
After completing the horizontal loop across all sectors $j = 1, \dots, n$, we evaluate the global error reduction:
\begin{align}
\Delta \mathscr{L} = \mathscr{L} \left( \Sigma^{(k)} \right) - \mathscr{L} \left( \Sigma^{(k-1)} \right).
\end{align}
If $|\Delta \mathscr{L}|$ falls below a predefined tolerance threshold $\epsilon$, the algorithm terminates and outputs the calibrated elasticity matrix $\Sigma^* = \Sigma^{(k)}$. Otherwise, set $k \leftarrow k + 1$ and return to Step 1.

\vspace{0.2cm}
By restricting the L-BFGS-B solver to 1-dimensional bounded optimizations (Step 2) nested within a Block Coordinate framework (Step 1), the hybrid heuristic breaks the $\mathcal{O}(n^4)$ monolithic bottleneck. This structure-exploiting design guarantees ultra-fast and robust convergence, enabling the dynamic calibration of highly disaggregated macro-networks.


\section{Results and Application}
\label{sec:results}

\subsection{The Concavity Constraint and Computational Performance}
We first apply the proposed Dynamic Calibration (DC) framework to a highly aggregated 10-sector model of the US economy. We define the calibration window as a stable short-term period ($T=4$ years around the base year) where the fundamental structural parameters are assumed to be invariant. Initially, estimating 100 elasticity parameters to satisfy 40 historical accounting identities appears to be an underdetermined problem. We anticipated an abundance of exact zero-residual solutions and intended to use the Tikhonov soft constraint to select the solution closest to neutrality (the Cobb-Douglas benchmark).

However, empirical implementation reveals a profound reality: when strictly imposing the necessary and sufficient condition for global concavity ($\Sigma \ge 0$), an exact zero-residual solution ceases to exist for noisy real-world data. A purely generic flexible function with $\mathcal{O}(n^2)$ parameters (e.g., Translog) might mathematically solve the system with zero residuals, but enforcing global concavity (i.e., negative semidefinite Hessian) on such functions during optimization is computationally intractable. The CCES processing network uniquely guarantees this absolute economic mandate via simple non-negativity bounds. 

Consequently, the strict economic discipline of concavity reveals that perfectly fitting the data is impossible. Yet, the identified non-zero minimum remains rather meaningful in an economic sense. By navigating the concave domain and anchoring toward neutrality (as justified in Section \ref{subsec:dynamic_calibration}), the resulting elasticity matrix $\Sigma^*$ represents the most theoretically sound substitution topology available. When tested on the 10-sector model using the dual-objective function, both the monolithic L-BFGS-B solver and our hybrid heuristic converged to this exact same theoretically optimal boundary (a mean absolute percentage error of 1.65\%, with a 0.00\% macroeconomic aggregate error). 


While monolithic solvers can reach this limit in small pedagogical models, they fail catastrophically when scaled. In our 71-sector economy-wide network (optimizing approximately 4,400 parameters), the monolithic L-BFGS-B solver explicitly suffers from the curse of dimensionality; it requires over 1,800 seconds (and 17,500 function evaluations) just to complete two initial iterations, stagnating at a high objective error of 6.38. In stark contrast, our structure-exploiting hybrid heuristic flawlessly circumvents this computational bottleneck, achieving a highly accurate, macro-consistent convergence (an error of 6.16) in roughly 1,000 seconds.

\subsection{Validation: Replicating T\"{o}rnqvist TFP via Structural Parameters}
\begin{figure}[t!]
    \centering
    \includegraphics[width=\textwidth]{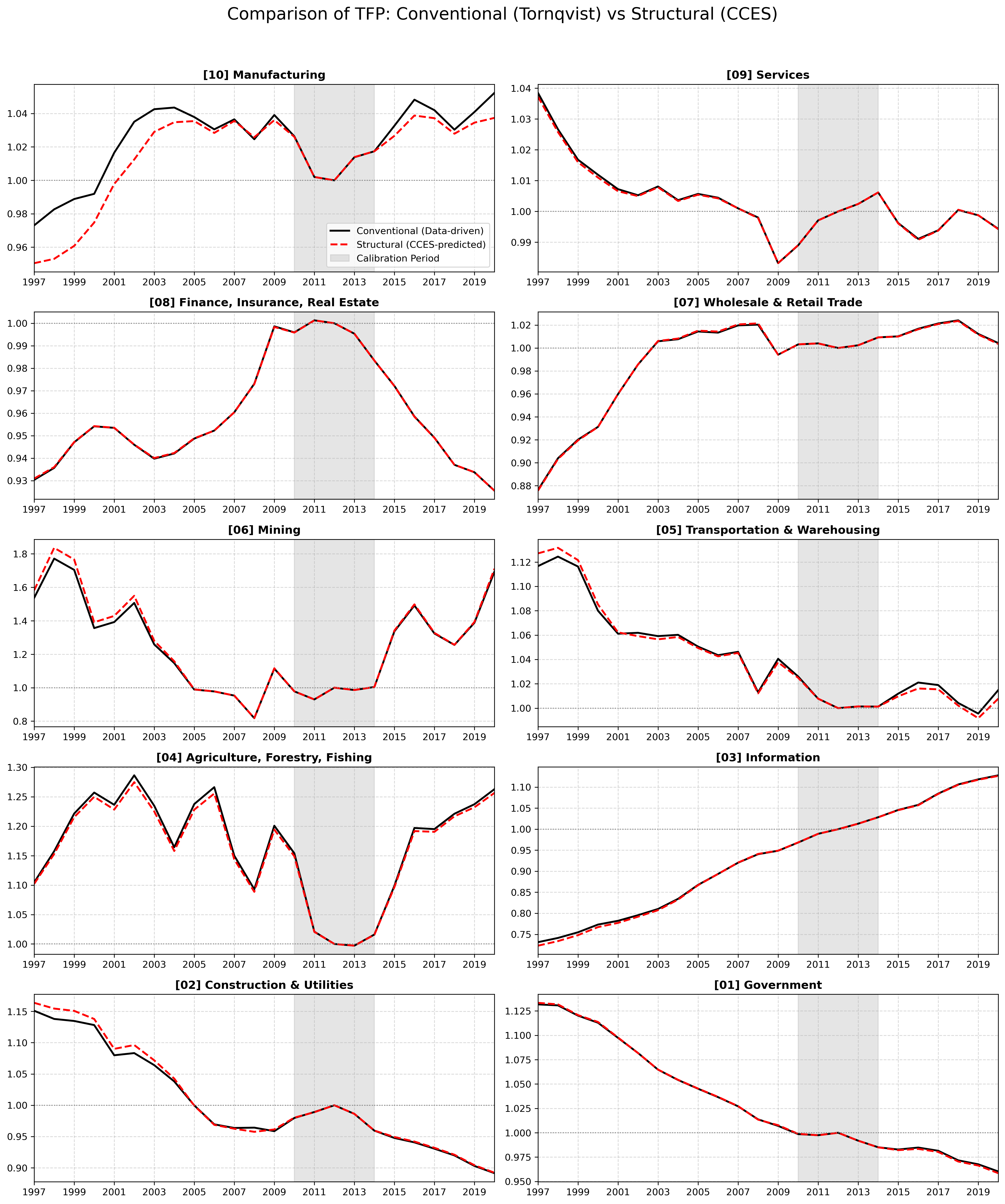}
    \caption{Comparison of TFP: T\"{o}rnqvist vs. CCES, ordered by upstreamness hierarchy.}
    \label{fig:tfp_comparison}
\end{figure}

Because empirical noise prevents a zero-residual exact fit, we must evaluate the validity of the calibrated elasticity matrix $\Sigma^*$ through its practical economic utility. Historically, a primary motivation for compiling time-series Input-Output tables is the ex-post measurement of macroeconomic productivity, notably the T\"{o}rnqvist Total Factor Productivity (TFP). As detailed in Appendix \ref{AppendixB}, the T\"{o}rnqvist index is the exact index for the Translog functional form; it calculates TFP directly from short-term fluctuations in ex-post cost shares. 

If our calibrated CCES model can parametrically replicate this data-driven metric without relying on year-to-year ex-post IO tables, it proves that the CCES matrix intrinsically captures the identical technological substitution potential of the Translog function. Using the calibrated structural parameters, we forward-simulate the cost share evolution (i.e., the endogenous supply-chain transformation) and extract the CCES-predicted structural TFP. 

Figure \ref{fig:tfp_comparison} superimposes this CCES structural TFP against the conventional T\"{o}rnqvist TFP. During the calibration window, the structural TFP mirrors the conventional indices with exceptional accuracy across diverse sectors. This alignment provides definitive evidence that the calibrated parameters are in effect legitimate, proving that our parametrically rigid, globally concave CCES model substantively captures the true technological dynamics of the network.


\subsection{Application: Endogenous Tail Risks and Supply Chain Resilience}
\label{subsec:tail_risks}

\begin{figure}[t!]
    \centering
    \includegraphics[width=\textwidth]{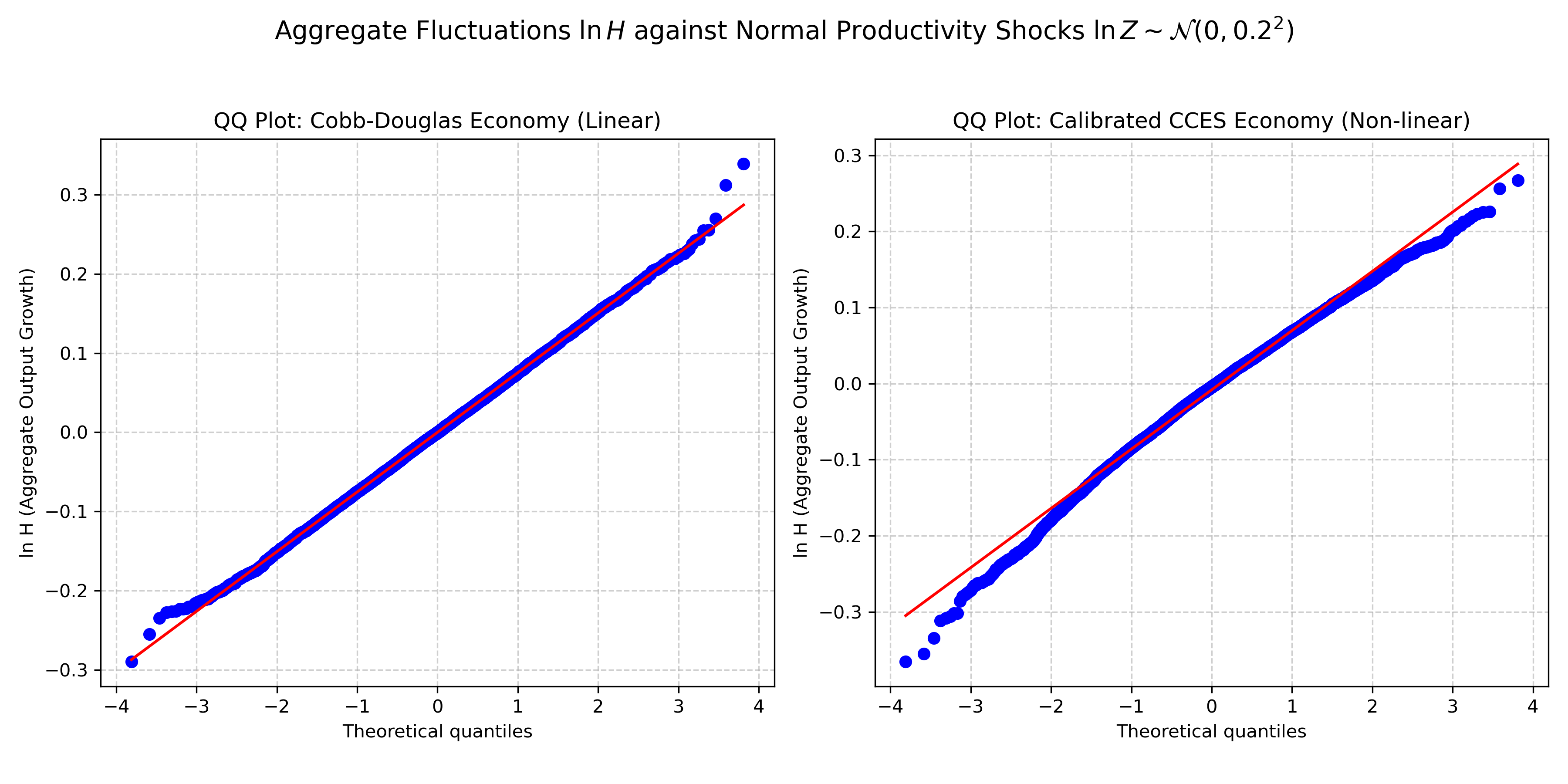}
    \caption{Aggregate Fluctuations against Normal Productivity Shocks: The linear, fully resilient Cobb-Douglas assumption (left) versus the empirically calibrated non-linear CCES network (right) revealing systemic tail risks.}
    \label{fig:qq_plot}
\end{figure}

Having validated the structural parameters and the predictive accuracy of the Dynamic Calibration algorithm, we apply the empirically parameterized processing network to investigate a central problem in operations management: the origins of systemic supply chain vulnerabilities and macroeconomic tail risks. 

Recall from Section \ref{subsec:cces_cost_shares} that the general equilibrium price system is defined as $\boldsymbol{p} = \mathbf{Z}^{-1} \mathbf{C}(\boldsymbol{p}, \boldsymbol{w} \mid {\Lambda}, {\Sigma})$. Because the CCES unit cost function is strictly increasing and globally concave under the economic regularity condition $\Sigma \ge 0$, this continuous vector mapping $\boldsymbol{p} \mapsto \boldsymbol{p}$ guarantees global convergence to a unique positive fixed point, provided that a feasible fixed point exists \citep{Krasnoselskii1964, Kennan2001}. 

Leveraging this robust convergence property, we inject normally distributed microeconomic productivity shocks ($\ln Z \sim \mathcal{N}(0, \sigma^2)$)---representing random operational disruptions or environmental variations---into the calibrated network and trace the distribution of the resulting equilibrium aggregate output ($\ln H$) via Monte Carlo simulations. Figure \ref{fig:qq_plot} presents the Quantile-Quantile (QQ) plots of these aggregate output fluctuations. The left panel demonstrates that a hypothetical Cobb-Douglas economy (where $\sigma_{ij} = 1$) translates normal micro-shocks linearly into normal aggregate fluctuations. In such a perfectly elastic environment, the network possesses infinite substitutability to bypass any local disruption. In stark contrast, the right panel displays the response of the actual calibrated CCES economy, revealing a distinct, negatively skewed asymmetric heavy tail. 

This endogenous tail risk structurally corroborates a critical insight for supply chain management: the U.S. economy around the calibration window was structurally \textit{inelastic}. In the context of operations analytics, an inelastic network equates to a vulnerable system lacking the necessary structural redundancy and substitutability to absorb severe supply chain bottlenecks. As theoretically explored in \cite{NakanoNishimura2026}, a rigid processing network subjected to extreme negative productivity shocks is vulnerable to a ``Negative Singularity''---a catastrophic structural breakdown (or complete supply-chain paralysis) where a positive equilibrium price vector fails to exist. 

To be clear, the empirical inelasticity of the U.S. economy during this specific window does not directly indicate an imminent collapse. Rather, this finding illustrates a fundamental operational limit: if a rigid, inelastic supply network were forced to undergo a massive, systemic introduction of productivity-lowering constraints (such as stringent climate-change regulations or sudden geopolitical trade barriers), it could eventually breach its metabolic capacity and reach this structural breaking point. 

Conversely, if a calibrated network proves to be sufficiently elastic (i.e., highly resilient), positive productivity shocks and technological innovations can generate synergistic, network-wide amplifications rather than isolated local improvements. While quantitatively mapping the precise threshold of this negative singularity remains a vital subject for future operations research, the asymmetric tail risks observed here underscore the critical role of non-linear substitution elasticities in evaluating the true resilience of global supply chains. Our framework provides decision-makers with a predictive analytics tool to look beyond static accounting and conduct rigorous stress tests on the underlying structural topology of the economy.


\section{Concluding Remarks}
\label{sec:conclusion}

In this paper, we introduced the Cascaded CES (CCES) general equilibrium model and developed the Dynamic Calibration (DC) algorithm to provide a data-driven framework for evaluating the resilience and endogenous transformation of economywide production networks. Moving beyond the limitations of static, neutral network models that ignore supply chain reconfiguration, our framework explicitly incorporates the hierarchical engineering structure of actual material flows and non-unitary substitution elasticities. 

By applying the DC algorithm to U.S. time-series data, we successfully broke the curse of dimensionality inherent in highly non-linear, multi-sector nested structures. This methodological breakthrough in operations analytics enabled us to organically reproduce the T\"{o}rnqvist TFP index as an endogenous structural parameter. More crucially, it allowed us to visually demonstrate the emergence of asymmetric macroeconomic tail risks---systemic vulnerabilities driven by the network's inherent rigidities and inability to seamlessly substitute inputs during disruptions.

While our approach opens a new avenue for operations analytics and supply chain risk management, we acknowledge certain data-driven limitations in the current implementation. Specifically, the value-added components in our empirical application were not disaggregated into capital and labor, reflecting the consolidated nature of the time-series input-output (IO) tables used. Furthermore, the aggregation into 10 broad sectors was deliberately chosen; the primary objective of this paper is to serve as a proof-of-concept, demonstrating the computational feasibility and practical validity of the CCES and DC framework. We purposely refrain from pursuing a perfect cell-by-cell reproduction of the empirical input-output coefficients ($\mathbf{A}_t$). Published IO tables are inevitably subjected to mechanical balancing procedures (e.g., the RAS method), meaning the raw coefficients do not always represent the absolute ``ground truth'' of technological substitution. Therefore, overfitting to these statistical artifacts is less meaningful than capturing the broader dynamic trends of supply chain resilience and endogenous transformations.

The tractability and efficiency of our framework pave the way for several highly practical policy and management applications, as well as promising directions for future research. First, the DC algorithm is remarkably data-efficient: it does not strictly require full time-series IO tables to operate, relying only on sectoral output, price deflators, and final demand data. This minimal data requirement implies that our analytics engine can be scaled to much higher resolutions (e.g., hundreds of sectors) at a low computational and data-gathering cost. Expanding the empirical scope to include other major economies, such as China, Japan, or the Eurozone, will facilitate cross-country benchmarking of structural robustness and global supply chain resilience.

Finally, the analytical insights derived from the CCES model can fundamentally reshape how governments and enterprises approach industrial policy and strategic investment. As global economies transition toward stricter sustainability standards, future studies could utilize our framework to quantitatively map the precise thresholds of ``Negative Singularity''---predicting catastrophic systemic limits before large-scale, productivity-lowering environmental technologies are fully mandated. Conversely, for highly resilient (elastic) networks, decision-makers can leverage the calibrated, heterogeneous elasticity parameters to design \textit{optimal industrial investment portfolios}. By identifying sectors where technological interventions will generate positive synergies and cascade through the production network, policymakers can maximize the social and economic return on innovation. We believe that integrating such engineering-inspired hierarchical structures into operations analytics will remain a rich and vital frontier for ensuring the viability and resilience of the global economy.

\appendix
\normalsize
\renewcommand{\thesection}{\Alph{section}}
\makeatletter
\renewcommand{\@seccntformat}[1]{Appendix \csname the#1\endcsname\quad}
\makeatother
\numberwithin{equation}{section}

\section{Concavity Proof for CCES Functions} \label{AppendixA}

A fundamental microeconomic requirement for the CCES processing network is that the unit cost function must be globally concave with respect to the price vector. By the well-established duality between production and cost \citep[e.g.,][]{Diewert1974}, the concavity of the unit cost function with respect to prices is mathematically equivalent to the global concavity of the underlying production function with respect to input quantities. 

Therefore, to establish the necessary and sufficient conditions for the economic regularity of the CCES framework, we define the corresponding CCES production function using input quantities $x_i$, and formally prove its global concavity.

\begin{definition} 
The CCES production function corresponding to the cost structure in Eq. \eqref{eq:cces_recursive} is defined as follows:
\begin{align}
{\chi}_{i-1} = 
\left( ({\lambda}_i)^{\frac{1}{{\sigma}_i}} ({x}_i)^{\frac{{\sigma}_i - 1}{\sigma_i}} + (1 - {\lambda_i})^{\frac{1}{{\sigma}_i}} ({\chi}_i)^{\frac{{\sigma}_i - 1}{\sigma_i}} \right)^{\frac{\sigma_i}{{\sigma}_i - 1}}, &&  i = 1, \dots, n,
\label{cces_pf}
\end{align}
where ${\chi}_{0} = y$ is the final output quantity, and ${\chi}_{n} = \ell$ is the primary factor (value-added) quantity.
\end{definition}

\begin{proposition}
The CCES production function ${F}({x}_1, \dots, {x}_{n}, \ell)$ is globally concave if and only if ${\sigma}_{i} \geq 0$ for all ${i} = 1, \dots, {n}$. By duality, this is the necessary and sufficient condition for the CCES unit cost function to be globally concave with respect to prices.
\end{proposition}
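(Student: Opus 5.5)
The plan is to treat the CCES production function in Eq. \eqref{cces_pf} as a recursive composition of two-input CES aggregators,
\begin{align}
\chi_{i-1} = G_i(x_i, \chi_i), \qquad G_i(a,b) = \left( \lambda_i^{1/\sigma_i} a^{\rho_i} + (1-\lambda_i)^{1/\sigma_i} b^{\rho_i} \right)^{1/\rho_i}, \qquad \rho_i = \frac{\sigma_i - 1}{\sigma_i}.
\end{align}
The cases $\sigma_i = 0$ and $\sigma_i = 1$ are understood as the usual limits: Leontief $\min$ for $\sigma_i = 0$, and Cobb--Douglas for $\sigma_i = 1$. The proof then splits into sufficiency and necessity. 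The final sentence of the proposition, about the unit cost function, I would take directly from the production--cost duality already invoked above \citep{Diewert1974}, so the work is entirely on the production side.

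\emph{Sufficiency.} I would first record the standard fact that a two-input CES aggregator has three properties whenever $\rho_i \le 1$: it is nondecreasing in each argument, positively homogeneous of degree one, and concave on $\mathbb{R}^2_+$. The condition $\rho_i \le 1$ is exactly $\sigma_i \ge 0$, including the limiting cases $\rho_i \to -\infty$ (Leontief) and $\rho_i = 0$ (Cobb--Douglas). For concavity, I would use that a positively homogeneous function of degree one is concave if and only if it is quasi-concave. Quasi-concavity in turn follows because the upper level sets of $\lambda a^{\rho} + \mu b^{\rho}$ are convex for $0<\rho\le 1$, since the power $t \mapsto t^{\rho}$ is concave there. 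For $\rho<0$ I would use the corresponding lower level sets, since then the power is convex and raising to $1/\rho<0$ reverses the inequality. I would then induct downward from the top of the cascade. The base case $\chi_n = \ell$ is linear, hence concave. If $\chi_i$ is concave in $(x_{i+1},\dots,x_n,\ell)$, then $\chi_{i-1} = G_i(x_i, \chi_i(\cdot))$ is a concave function that is nondecreasing in its second slot, evaluated at a concave argument, and is therefore concave. After $n$ steps this gives concavity of $F = \chi_0$.

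\emph{Necessity.} Suppose $\sigma_i < 0$ for some $i$. Then $\rho_i = 1 - 1/\sigma_i > 1$, so $G_i$ is a $p$-norm-type function and is strictly convex rather than concave along non-radial directions. I would produce a two-dimensional slice of $F$ whose upper level sets fail to be convex. Fix $x_1,\dots,x_{i-1}$ at positive values, and fix the upstream inputs up to a common scale $t$. Because $\chi_i$ is homogeneous of degree one, it then equals $t\,c$ for some constant $c>0$. Along this slice $F$ is a strictly increasing transformation of $\chi_{i-1} = G_i(x_i, tc)$, given the fixed downstream inputs: each downstream aggregator is strictly increasing in its composite argument for any sign of $\sigma$, since $1/\sigma>0$ weights are positive. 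Consequently the upper level sets of $F$ on the $(x_i,t)$ slice coincide with those of $G_i$. For $\rho_i>1$ these are complements of convex sets in the positive quadrant and are not convex. A concave $F$ must have convex upper level sets on every affine slice, so we get a contradiction.

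I expect the main obstacle to be the necessity direction. Three things must be handled there: the behaviour of CES terms near zero inputs when $\sigma_i<0$; the fact that downstream levels may also have negative elasticities, which is why I argue through strict monotonicity and level sets rather than through concavity of the composition; and a clean statement of the limiting cases $\sigma_i \in \{0,1\}$. For the first point I would keep all inputs strictly positive throughout.
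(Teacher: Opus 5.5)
Your proof is correct. The sufficiency half follows the paper's route: Lemma 2 composes a concave two-factor CES, nondecreasing in its composite slot, with a concave upstream composite and inducts backward. You also justify the concavity of the two-factor aggregator (via homogeneity plus quasi-concavity), which the paper only asserts. The duality sentence is simply invoked in both.

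Necessity is where you take a genuinely different route.
\begin{itemize}
\item \emph{The paper} works locally and to second order. It freezes the upstream block and writes $\chi_0=F(\mathbf{x},\chi)$, with $F$ homogeneous of degree one in the downstream inputs and the composite. It then picks a direction along which $(\mathbf{x},\chi)$ moves radially. Euler's identity annihilates the $F$-Hessian terms, and the quadratic form collapses to $F_\chi G_{xx}$, which has the sign of $-\sigma_k$.
\item \emph{You} use homogeneity on the other side of level $i$, namely radial scaling of the upstream block. On a two-dimensional affine slice, $F$ becomes a monotone transform $\chi_0=\Phi(\chi_{i-1})$ of the single aggregator $G_i$. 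You then read off nonconvex upper level sets when $\rho_i>1$.
\end{itemize}

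What each buys: your argument is derivative-free and point-free, and it proves the stronger fact that $F$ is not even quasi-concave on a plane slice. The paper's argument gives an explicit curvature formula, but it must be evaluated at a specific point. Its evaluation at $(1,\dots,1)$, using $G_x=\lambda_k^{1/\sigma_k}$ and the Euler vector $(\mathbf{1},1)$, tacitly sets $\chi_{k-1}=1$. That is false in general, and impossible with $x_k=1$ when $\sigma_k<0$, since then $G(1,m)\ge\lambda_k^{1/(\sigma_k-1)}>1$. The sign conclusion does hold at every interior point, so the paper's argument is repairable, but yours sidesteps the issue.

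Three small points to tighten:
\begin{itemize}
\item A Leontief layer ($\sigma_j=0$, $j<i$) is not strictly increasing in its composite. Fix the downstream inputs large enough that each $\min$ is attained by the composite near the level you use; you only need $\Phi(s)<\Phi(c)$ for $s<c$.
\item State $\lambda_i\in(0,1)$ explicitly. Positivity of $\lambda_j^{1/\sigma_j}$ and $(1-\lambda_j)^{1/\sigma_j}$ comes from $0<\lambda_j<1$ for any $\sigma_j\neq 0$, not from ``$1/\sigma>0$''. For $\lambda_i\in\{0,1\}$, level $i$ degenerates and $\sigma_i$ is unrestricted.
\item Make the nonconvexity explicit. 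For $\rho_i>1$ the sublevel sets of $G_i$ are strictly convex weighted $\rho_i$-balls. Hence the midpoint of two distinct points on a level curve in the open quadrant falls strictly below that level.
\end{itemize}
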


\begin{lemma}
The CCES function ${F}({x}_1, \dots, {x}_{n}, \ell)$ is concave at the reference point $(1, \dots, 1)$ only if the substitution elasticities satisfy ${\sigma}_{k} \geq 0$ for all ${k} = 1, \dots, {n}$.
\end{lemma}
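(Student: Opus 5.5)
Since the lemma is only a necessity claim, my plan is to argue by contraposition. I suppose $\sigma_k<0$ for some level $k$ and show that the Hessian of $F$ at the reference point $(x_1,\dots,x_n,\ell)=(1,\dots,1)$ is not negative semidefinite. The natural approach is to exhibit one direction along which the second directional derivative is strictly positive. At the reference point the recursion \eqref{cces_pf} with all inputs equal to one gives a constant $\chi_i$ at every level. With the normalization used in Section \ref{subsec:reference_state}, where $\lambda$ is built from cost shares at unit prices, all composites $\chi_i$ equal $1$. This also makes each CES block locally symmetric in its scaled arguments.

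First, I isolate level $k$ by perturbing only two quantities: $x_k$ and the composite $\chi_k$. The composite $\chi_k$ is moved by perturbing $\ell$ alone when $k=n$; otherwise I perturb all inputs strictly above level $k$ proportionally, so that by constant returns to scale $\chi_k$ moves linearly. I choose the perturbation direction $d$ so that the first-order change in $\chi_{k-1}$ vanishes. By Euler's theorem for the homogeneous-of-degree-one block $G_k(x_k,\chi_k)$, this means $\partial_1 G_k\,d_{x_k} + \partial_2 G_k\,d_{\chi_k}=0$, which is a direction transversal to the ray.

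Second, I compute the second-order term. For a two-input CES block with elasticity $\sigma$, the standard identity is
\begin{align}
\frac{\partial^2 G}{\partial x\,\partial \chi} = \frac{1}{\sigma}\frac{G_x G_\chi}{G},
\end{align}
together with $G_{xx}=-\tfrac{\chi}{x}G_{x\chi}$ and $G_{\chi\chi}=-\tfrac{x}{\chi}G_{x\chi}$, all obtained from homogeneity. With these, the quadratic form of $G_k$ along $d$ equals $-\tfrac{1}{\sigma_k}\,c\,(\text{positive})$ for some $c>0$ built from $\lambda_k$. This is strictly positive when $\sigma_k<0$. Note that $\sigma_k<0$ gives exponent $(\sigma_k-1)/\sigma_k>0$, so the block is convex-like rather than concave.

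Third, I propagate this through the downstream levels $1,\dots,k-1$. By the chain rule,
\begin{align}
D^2F[d,d] = \frac{\partial y}{\partial \chi_{k-1}}\,D^2\chi_{k-1}[d,d] + \frac{\partial^2 y}{\partial \chi_{k-1}^2}\,\bigl(D\chi_{k-1}[d]\bigr)^2.
\end{align}
The second term vanishes because $D\chi_{k-1}[d]=0$. The factor $\partial y/\partial\chi_{k-1}=\prod_{i<k}(1-\lambda_i)^{1/\sigma_i}\cdots$ is strictly positive at the reference point whenever $\lambda_i<1$. Hence $D^2F[d,d]>0$, which contradicts concavity.

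I expect the main obstacle to be degenerate cases rather than the computation itself. These are $\lambda_k\in\{0,1\}$, some downstream $\lambda_i=1$ that kills the transmission factor, and the $\sigma_i=0$ or $\sigma_i=1$ limits where \eqref{cces_pf} must be read as Leontief or Cobb-Douglas forms. I would first restrict to interior shares $0<\lambda_i<1$, as implied by strictly positive baseline IO coefficients, and handle $\sigma_i\in\{0,1\}$ downstream by noting that only first derivatives of those blocks enter the propagation. Whether Leontief blocks are even twice differentiable at the kink needs care. At the reference point, however, the first-order perturbation of $\chi_{k-1}$ is zero, so a one-sided directional argument should suffice.
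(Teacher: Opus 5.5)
Your argument is correct, under the same tacit regularity the paper uses: $\lambda_i\in(0,1)$ and smooth CES blocks below level $k$. However, your test direction is essentially the mirror image of the paper's.

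The paper freezes everything \emph{above} level $k$. It perturbs $x_1,\dots,x_{k-1}$ together with $x_k$ along $\mathbf{v}=(\lambda_k^{1/\sigma_k}\mathbf{1},1)$, so that the argument $(\mathbf{x},\chi_{k-1})$ of the downstream aggregator moves along a ray. Euler's identities for the Hessian of that degree-one aggregator then cancel every downstream term and leave $F_\chi G_{xx}$, the own-curvature of the level-$k$ block in $x_k$ alone.

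You instead freeze everything \emph{below} level $k$, move $\chi_k$ linearly through a radial perturbation of the upstream inputs, and impose $D\chi_{k-1}[d]=0$. The downstream function then enters only through the first-order factor $\partial y/\partial\chi_{k-1}>0$. This multiplies the block's curvature along a non-radial direction, $-G_{x\chi}(\chi_k d_x-x_k d_\chi)^2/(x_k\chi_k)$ with $G_{x\chi}=\sigma_k^{-1}G_xG_\chi/G$.

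Both routes end at the same sign fact about a two-input CES block. Yours never touches the Hessian of the multi-input downstream function and is valid at any interior point. The paper's gives a closed-form curvature, but only under a normalization that does not actually hold at $(1,\dots,1)$.

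Your claim that all composites equal $1$ there is false for \eqref{cces_pf} as written. Since $G_n(\lambda_n,1-\lambda_n)=1$ and $G_n$ is strictly increasing, $\chi_{n-1}=G_n(1,1)>1$. Nothing in your steps uses this claim. By contrast, the paper's Euler cancellation and its value $G_x=\lambda_k^{1/\sigma_k}$ tacitly need $\chi_{k-1}=1$. In general its direction must be rescaled to $(G_x\chi_{k-1}^{-1}\mathbf{1},1)$, which leaves the sign unaffected.

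Separately, drop your proposed one-sided fix for a downstream Leontief block. Off its kink, its derivative in $\chi_i$ can be exactly zero, and then the local statement itself fails. Take $n=2$, $\lambda_1=\lambda_2=1/2$, $\sigma_1=0$, $\sigma_2=-1$. At $(1,1,1)$ one has $\chi_1=(2x_2^2+2\ell^2)^{1/2}=2$, and near that point $y=\min\{2x_1,2\chi_1\}=2x_1$, which is locally linear even though $\sigma_2<0$.

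So assume $\sigma_i\neq0$ below level $k$, as the paper's factor $\prod_i(1-\lambda_i)^{1/\sigma_i}$ implicitly does. Alternatively, for the global Proposition, evaluate at a point where every Leontief block binds on $\chi_i$.
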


\begin{proof}
Consider a specific hierarchical level ${k}$. We evaluate the condition by fixing all upstream variables ${x}_{{k}+1}, \dots, {x}_{n}$ and $\ell$ to 1. For notational brevity, we write ${\chi} = {\chi}_{{k}-1}$ and ${x} = {x}_{k}$.
The overall CCES function can then be decomposed as follows:
\begin{align}
    {\chi}_0 = {F}(\mathbf{x}, {\chi}), && {\chi} = {G}({x}, 1),
\end{align}
where $\mathbf{x} = ({x}_1, \dots, {x}_{{k}-1})$ represents the vector of downstream inputs, ${\chi} = {G}({x}, 1)$ is the composite output from level ${k}$, and ${F}$ is the aggregate downstream function. Note that ${F}$ is homogeneous of degree one with respect to its arguments $(\mathbf{x}, {\chi})$.

We evaluate the $k \times k$ local Hessian matrix $\mathbf{H}$ with respect to $(\mathbf{x}, {x})$ at the reference point $(1, \dots, 1)$. Using the chain rule, $\mathbf{H}$ can be partitioned into block matrices:
\begin{equation}
    \mathbf{H} = \begin{pmatrix} 
    {F}_{\mathbf{x}^\prime \mathbf{x}} & 
    {F}_{\mathbf{x}^\prime{\chi}}{G}_{x} \\ 
    {G}_{x} {F}_{{\chi}\mathbf{x}} & 
    {F}_{{x}{x}}
    \end{pmatrix}
\end{equation}
where ${F}_{\mathbf{x}^\prime \mathbf{x}} = \frac{\partial^2 {F}}{\partial {x}_{i}^\prime \partial {x}_{j}}$, ${F}_{\mathbf{x}{\chi}} = \frac{\partial^2 {F}}{\partial {x}_{i} \partial {\chi}}$, and ${F}_{{\chi}\mathbf{x}} = \frac{\partial^2 {F}}{\partial {\chi} \partial {x}_{j}}$, for $i, j = 1, \dots, k-1$.
Note also that:
\begin{align*}
{G}_{x} &= \frac{\partial {G}}{\partial {x}} 
= \frac{\partial {\chi}_{{k}-1}}{\partial {x}_{k}}
= ({\lambda}_k)^{\frac{1}{{\sigma}_k}}
\\
{F}_{xx} &= 
\frac{\partial^2 {F}}{\partial {\chi}^2} \left( \frac{\partial {G}}{\partial {x}} \right)^2 + \frac{\partial {F}}{\partial {\chi}}\frac{\partial^2 {G}}{\partial {x}^2}
= {F}_{{\chi} {\chi}} ({\lambda}_{k})^{\frac{2}{{\sigma}_k}} + {F}_{\chi} {G}_{{x}{x}}.
\end{align*}

Because ${F}$ is homogeneous of degree one, Euler's theorem dictates that the Hessian of ${F}$ multiplied by the reference point vector $(\mathbf{1}, 1)^\prime$ yields the zero vector:
\begin{equation*}
    \begin{pmatrix} 
    {F}_{\mathbf{x}^\prime \mathbf{x}} & {F}_{\mathbf{x}^\prime {\chi}} \\ 
    {F}_{{\chi} \mathbf{x}} & {F}_{{\chi} {\chi}} 
    \end{pmatrix} 
    \begin{pmatrix} \mathbf{1}^\prime \\ 1 \end{pmatrix} 
    = \begin{pmatrix} \mathbf{0}^\prime \\ 0 \end{pmatrix}.
\end{equation*}
This provides the following identities: 
\begin{align}
{F}_{\mathbf{x}^\prime \mathbf{x}} \mathbf{1}^\prime = - {F}_{\mathbf{x}^\prime {\chi}},
&&
{F}_{{\chi} \mathbf{x}} \mathbf{1}^\prime = - {F}_{{\chi} {\chi}}.
\label{euler}
\end{align}

Consider the following direction vector:
\begin{align*}
    \mathbf{v} 
= \left( ({\lambda}_k)^{\frac{1}{{\sigma}_k}}, \dots, ({\lambda}_k)^{\frac{1}{{\sigma}_k}}, 1 \right) = \left( ({\lambda}_k)^{\frac{1}{{\sigma}_k}} \mathbf{1}, 1 \right).
\end{align*}
The second-order directional derivative along $\mathbf{v}$ is given by the quadratic form:
\begin{align*}
\mathbf{v} \mathbf{H} \mathbf{v}^\prime 
&= ({\lambda}_k)^{\frac{2}{{\sigma}_k}} \left(
\mathbf{1} {F}_{\mathbf{x}^\prime \mathbf{x}} \mathbf{1}^\prime  
+ {F}_{{\chi}\mathbf{x}} \mathbf{1}^\prime 
+ \mathbf{1} {F}_{\mathbf{x}^\prime {\chi}}  
+ {F}_{\chi \chi}  
\right)
+ {F}_{\chi} {G}_{{x}{x}} 
\\
&= {F}_{\chi} {G}_{{x}{x}},
\end{align*}
where we used the identities in \eqref{euler} to cancel the terms in the parenthesis.
For the CCES function, the derivatives at the reference point evaluate to:
\begin{align*}
{F}_{\chi} &= \frac{\partial {\chi}_0}{\partial {\chi}_{{k}-1}} 
= \frac{\partial {\chi}_0}{\partial {\chi}_1}\frac{\partial {\chi}_1}{\partial {\chi}_2} \cdots \frac{\partial {\chi}_{{k}-2}}{\partial {\chi}_{{k}-1}} 
= \prod_{{i}=1}^{{k}-1} (1 - {\lambda}_i)^{\frac{1}{{\sigma}_i}}, \\
{G}_{xx} &= 
\frac{\partial^2 {\chi}_{{k}-1}}{ \partial {x}_{k}^2}
= - \frac{({\lambda}_k)^{\frac{1}{{\sigma}_{k}}} (1 - {\lambda}_k)^{\frac{1}{{\sigma}_k}}}{{\sigma}_{k}}.
\end{align*}
Hence, we obtain the exact directional derivative:
\begin{align}
\mathbf{v} \mathbf{H} \mathbf{v}^\prime 
= - \frac{({\lambda}_k)^{\frac{1}{{\sigma}_{k}}} \prod_{{i}=1}^{k} (1 - {\lambda}_i)^{\frac{1}{{\sigma}_i}}}{{\sigma}_{k}}.
\end{align}
Because $\lambda_i \in (0, 1)$, the numerator is strictly positive. Therefore, if ${\sigma}_{k} < 0$, then $\mathbf{v} \mathbf{H} \mathbf{v}^\prime > 0$. This strictly positive curvature implies the existence of a direction $\mathbf{v}$ along which the function is strictly convex, directly contradicting the necessary condition for global concavity. Thus, for the CCES function to be globally concave, it is strictly necessary that ${\sigma}_{k} \geq 0$ for all ${k}$.
\end{proof}

\begin{lemma}
If ${\sigma}_{i} \geq 0$ for all ${i} = 1, \dots, {n}$, the CCES function ${F}({x}_1, \dots, {x}_{n}, \ell)$ is globally concave.
\end{lemma}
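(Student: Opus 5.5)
The plan is to argue by induction along the cascade, from the uppermost level $n$ down to the final output $\chi_0$, using the standard composition rule for concave functions: a concave function that is nondecreasing in each argument, composed with concave functions, is concave. The base case is immediate, since $\chi_n=\ell$ is linear and hence concave in $(x_1,\dots,x_n,\ell)$. The inductive hypothesis is that the composite $\chi_i$, viewed as a function of the upstream inputs $(x_{i+1},\dots,x_n,\ell)$ and trivially extended to all arguments, is concave, nonnegative and nondecreasing. I will then show that $\chi_{i-1}$ inherits these three properties, so that $F=\chi_0$ is globally concave on the nonnegative orthant.

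The key step is a one-level lemma. For $\sigma_i\ge 0$, the two-input CES aggregator
\begin{align}
\Phi_i(a,b)=\left( \lambda_i^{1/\sigma_i} a^{\frac{\sigma_i-1}{\sigma_i}} + (1-\lambda_i)^{1/\sigma_i} b^{\frac{\sigma_i-1}{\sigma_i}}\right)^{\frac{\sigma_i}{\sigma_i-1}}
\end{align}
is concave and nondecreasing on $\mathbb{R}^2_+$. I would prove this by writing $\rho_i=(\sigma_i-1)/\sigma_i\in(-\infty,1]$. For $\rho_i\in(0,1)$ and for $\rho_i<0$, $\Phi_i$ is a weighted power mean of order $\rho_i\le 1$. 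Such a mean is homogeneous of degree one, and its upper level sets are convex because $a^{\rho}$ is concave for $\rho\in(0,1)$ and convex decreasing for $\rho<0$. A positively homogeneous, quasi-concave, positive function is concave, which gives the result. Equivalently, I can compute the $2\times2$ Hessian directly: the diagonal entries carry the factor $-(1/\sigma_i)$ and the determinant vanishes by Euler's theorem, exactly as in the directional computation of the preceding lemma, where $G_{xx}=-\lambda_k^{1/\sigma_k}(1-\lambda_k)^{1/\sigma_k}/\sigma_k\le0$. The limiting cases are handled separately or by continuity, since pointwise limits of concave functions are concave:
\begin{itemize}
\item $\sigma_i=1$ gives the Cobb--Douglas form $\lambda_i^{-\lambda_i}(1-\lambda_i)^{-(1-\lambda_i)}a^{\lambda_i}b^{1-\lambda_i}$;
\item $\sigma_i=0$ gives the Leontief form $\min\{a/\lambda_i,\,b/(1-\lambda_i)\}$;
\item $\sigma_i\to\infty$ gives the linear case.
\end{itemize}
Monotonicity holds because the partial derivatives $\partial\chi_{i-1}/\partial x_i=\lambda_i^{1/\sigma_i}(\chi_{i-1}/x_i)^{1/\sigma_i}$ and the analogous one in $\chi_i$ are nonnegative.

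The inductive step then reads $\chi_{i-1}=\Phi_i(x_i,\chi_i)$. Here $x_i$ is linear, $\chi_i$ is concave by hypothesis, and $\Phi_i$ is concave and nondecreasing, so the composition rule yields concavity of $\chi_{i-1}$, and nonnegativity and monotonicity pass through trivially. Iterating down to $i=1$ proves the lemma. Together with the previous lemma this gives the Proposition, and duality (Diewert) transfers the result to the cost side.

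I expect the main obstacle to be the boundary and limit behaviour rather than the algebra: at $\sigma_i=0$ the aggregator is not differentiable, and for $\rho_i<0$ the expression is undefined at zero inputs. I would first establish concavity on the open orthant, where everything is smooth, using the Hessian argument. I would then extend to the closed orthant by continuity, defining $\Phi_i=0$ whenever an input is zero and $\rho_i<0$, and treat $\sigma_i=0$ as the pointwise limit of $\sigma_i\downarrow0$.
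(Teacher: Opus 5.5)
Your proposal is correct and follows essentially the same route as the paper. Both argue by backward induction from the uppermost level $n$: $\chi_{i-1}=\Phi_i(x_i,\chi_i)$ is concave because the two-factor CES block is jointly concave and nondecreasing in the upstream composite. The paper writes out the resulting convex-combination inequality by hand, whereas you invoke the standard composition rule. What you add is a justification of the two-factor concavity itself and of the limit and boundary cases ($\sigma_i=0$, $\sigma_i=1$, zero inputs), which the paper simply asserts.
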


\begin{proof}
Focus on a specific level ${k}$. The composite CCES function from level ${k}$ and its upstream components, ${\chi}_{{k}-1} = {G}_{k} ( {x}_{k}, {x}_{{k}+1}, \dots, {x}_{n}, \ell ) $, can be structurally decomposed as:
\begin{align*}
{\chi} = {F}_{k}({x}, {m}), && {m} = {G}_{{k}+1}({\Xi}),
\end{align*}
where we define ${\chi} = {\chi}_{{k}-1}$, ${x} = {x}_{k}$, ${m} = {\chi}_{k}$, and ${\Xi} = ({x}_{{k}+1}, \dots, {x}_{n}, \ell)$ for brevity. The ${k}$-th two-factor aggregation is denoted by ${F}_{k}$, such that ${F}_{k}({x}, {G}_{{k}+1}({\Xi})) = {G}_{k}({x}, {\Xi})$. Note that standard two-factor CES aggregations like ${F}_{k}$ are strictly concave if and only if their elasticity parameter satisfies ${\sigma}_{k} \ge 0$. 

Assume, by induction, that the upstream composite function ${G}_{{k}+1}$ is concave. Because both ${F}_{k}$ and ${G}_{{k}+1}$ are concave functions, they satisfy:
\begin{align*}
{F}_{k}({\theta} {x}_{a} + (1-{\theta}) {x}_{b}, {\theta} {m}_{a} + (1-{\theta}) {m}_{b}) &\geq {\theta} {F}_{k}({x}_{a}, {m}_{a}) + (1- {\theta}){F}_{k}({x}_{b}, {m}_{b}), \\
{G}_{{k}+1}({\theta} {\Xi}_{a} + (1-{\theta}) {\Xi}_{b}) &\geq {\theta} {G}_{{k}+1}({\Xi}_{a}) + (1- {\theta}) {G}_{{k}+1}({\Xi}_{b}),
\end{align*}
for any combinations ${x}_{a}, {x}_{b}, {\Xi}_{a}, {\Xi}_{b}$, and any ${\theta} \in [0, 1]$.
Let ${x}_{\theta} = {\theta} {x}_{a} + (1-{\theta}) {x}_{b}$, and ${\Xi}_{\theta} = {\theta} {\Xi}_{a} + (1-{\theta}) {\Xi}_{b}$. Since the outer function ${F}_{k}$ is monotonically increasing with respect to the upstream composite ${m}$ and jointly concave in $({x}, {m})$, it follows that:
\begin{align*}
{G}_{k}({x}_{\theta}, {\Xi}_{\theta}) 
&=  {F}_{k}({x}_{\theta}, {G}_{{k}+1}({\Xi}_{\theta})) \\
&\geq {F}_{k} \left( {x}_{\theta}, {\theta} {G}_{{k}+1}({\Xi}_{a}) + (1- {\theta}){G}_{{k}+1}({\Xi}_{b}) \right) \\
&= {F}_{k} \left( {\theta} {x}_{a} + (1-{\theta}) {x}_{b}, {\theta} {G}_{{k}+1}({\Xi}_{a}) + (1- {\theta}){G}_{{k}+1}({\Xi}_{b}) \right) \\
&\geq {\theta} {F}_{k}({x}_{a}, {G}_{{k}+1}({\Xi}_{a})) + (1-{\theta}) {F}_{k}({x}_{b}, {G}_{{k}+1}({\Xi}_{b})),
\end{align*}
which directly implies:
\begin{align*}
{G}_{k} \left( {\theta} {x}_{a} + (1-{\theta}) {x}_{b}, {\theta} {\Xi}_{a} + (1-{\theta}) {\Xi}_{b} \right) 
\geq {\theta} {G}_{k}({x}_{a}, {\Xi}_{a}) + (1-{\theta}) {G}_{{k}}({x}_{b}, {\Xi}_{b}).
\end{align*}
That is, the global concavity of ${G}_{k}$ is strictly implied by the concavity of its upstream counterpart ${G}_{{k}+1}$ and the condition ${\sigma}_{k} \geq 0$. Since the terminal upstream function ${G}_{n}$ is simply a two-factor concave function under ${\sigma}_{n} \geq 0$, the lemma holds universally by backward induction.
\end{proof}


\section{Data Construction and Sector Mapping} \label{AppendixB}

\subsubsection*{Time-Series Input-Output Tables and Deflators}
To construct the dynamically calibrated (DC) processing network, we extract temporal economic transactions and price deflators from time-series input-output (IO) tables. For each period $t$, statistical agencies provide both real and nominal IO tables, as illustrated in Table \ref{tab:two-tables}. 

\begin{table}[h!]
  \centering
  \begin{minipage}[t]{0.48\textwidth}
    \centering
    \begin{tabular}[t]{ccccc | c}
      ${x}_{11}$ & ${x}_{12}$ & $\cdots$ & ${x}_{1n}$ & ${f}_1$ & ${y}_1$ \\
      ${x}_{21}$ & ${x}_{22}$ & $\cdots$ & ${x}_{2n}$ & ${f}_2$ & ${y}_2$ \\
      $\vdots$   & $\vdots$   & $\ddots$ & $\vdots$ & $\vdots$ & $\vdots$ \\
      ${x}_{n1}$ & ${x}_{n2}$ & $\cdots$ & ${x}_{nn}$ & ${f}_n$ & ${y}_n$ \\
      ${\ell}_{1}$  & ${\ell}_{2}$  & $\cdots$ & ${\ell}_{n}$  & \multicolumn{1}{c}{} & \\[2pt]
      \cline{1-4} \noalign{\vspace{0.9pt}} \cline{1-4}
      ${y}_{1}$  & ${y}_{2}$  & $\cdots$ & $y_{n}$  & \multicolumn{1}{c}{} &
    \end{tabular}
  \end{minipage}
  \hfill
  \begin{minipage}[t]{0.48\textwidth}
    \centering
    \begin{tabular}[t]{ccccc | c}
      ${p}_{1} {x}_{11}$ & ${p}_{1} {x}_{12}$ & $\cdots$ & ${p}_{1} {x}_{1n}$ & ${p}_{1} {f}_{1}$ & ${p}_{1} {y}_{1}$ \\
      ${p}_{2} {x}_{21}$ & ${p}_{2} {x}_{22}$ & $\cdots$ & ${p}_{2} {x}_{2n}$ & ${p}_{2} {f}_{2}$ & ${p}_{2} {y}_{2}$ \\
      $\vdots$   & $\vdots$   & $\ddots$ & $\vdots$ & $\vdots$ & $\vdots$ \\
      ${p}_{n} {x}_{n1}$ & ${p}_{n} {x}_{n2}$ & $\cdots$ & ${p}_{n} {x}_{nn}$ & ${p}_{n} {f}_{n}$ & ${p}_{n} {y}_{n}$ \\
      ${w}_{1} {\ell}_{1}$  & ${w}_{2} {\ell}_{2}$  & $\cdots$ & ${w}_{n} {\ell}_{n}$  & \multicolumn{1}{c}{} & \\[2pt]
      \cline{1-4} 
      ${p}_{1} {y}_{1}$  & ${p}_2 {y}_{2}$  & $\cdots$ & ${p}_{n} {y}_{n}$  & \multicolumn{1}{c}{} &
    \end{tabular}
  \end{minipage}
  \caption{Input-output tables in real (left) and nominal (right) terms.}
  \label{tab:two-tables}
\end{table}

In the real IO table (left), the material flow strictly satisfies the row balance condition: $\sum_{j=1}^{n} {x}_{ij} + {f}_{i} = {y}_{i}$. However, except for the base year (e.g., 2012 for US data) where all prices are normalized to unity ($p_i = w_j = 1$), the column sum (accounting balance) of the real table does not hold; hence the double horizontal line. Conversely, the nominal IO table (right) strictly satisfies both the row balance and the column accounting balance: $\sum_{i=1}^{n} {p}_{i} {x}_{ij} + {w}_{j} {\ell}_{j} = {p}_{j} {y}_{j}$.

By exploiting this structure, the implicit price deflators for sectoral outputs and value-added inputs at time $t$ are derived simply by dividing the nominal entries by their corresponding real entries. For our DC algorithm, the target variables are extracted directly from the nominal table: the macroeconomic intermediate demand sums and value-added sums, effectively constructed from the final demand elements $d_i = p_i f_i$ and value-added elements $v_j = w_j \ell_j$. Furthermore, the baseline cost share parameters $a_{ij}$ used to anchor the processing network are directly obtained from the nominal table as $a_{ij} = (p_i x_{ij}) / (p_j y_j)$.

\subsubsection*{Measurement of T\"{o}rnqvist TFP}
The conventional T\"{o}rnqvist index provides an ex-post measure of Total Factor Productivity (TFP) and is theoretically renowned as the exact index for the Translog functional form. Using the dual (price-based) approach, the T\"{o}rnqvist TFP growth for sector $j$ evaluates productivity as the difference between the share-weighted growth of input prices and the growth of the output price. Let $a_{ij}(t)$ denote the nominal cost share of input $i$ in sector $j$ at time $t$ (including value-added components). The log-change in T\"{o}rnqvist TFP is calculated as:
\begin{align}
\Delta \ln \text{TFP}_j(t) = - \ln \frac{p_j(t)}{p_j(t-1)} 
+ \sum_{i} \frac{a_{ij}(t-1) + a_{ij}(t)}{2} \ln \frac{p_i(t)}{p_i(t-1)} .
\end{align}
In Section \ref{sec:results}, we demonstrate that the CCES processing network can systematically replicate this Translog-exact index using purely invariant structural parameters $\Sigma^*$ without requiring ex-post share updating, proving its substantive validity.

\subsubsection*{Upstreamness and Sector Mapping}
To structurally decompose the high-dimensional optimization, our algorithm requires a pre-defined hierarchical cascade for the processing network. We identify this hierarchy using the concept of \textit{Upstreamness} \citep{Dietzenbacher2007, Antras2012}. 
Let $\mathbf{A}$ be the baseline input coefficient matrix. The upstreamness vector is computed as:
\begin{align}
(\mathbf{I} - \mathbf{A})^{-2} \mathbf{1} = \mathbf{I}\mathbf{1} + 2\mathbf{A}\mathbf{1} + 3\mathbf{A}^2\mathbf{1} + 4\mathbf{A}^3\mathbf{1} + \cdots
\label{eq:upstreamness}
\end{align}
This formulation measures the degree of backward linkage by heavily weighting the Average Propagation Length (APL) of intermediate demands. 

We can conceptualize the processing network as a multi-story building. The highest floors represent the most upstream sectors (e.g., Manufacturing), where raw value-added enters the cascade. As products flow downward through the intermediate layers, they progressively accumulate downstream features until they reach the ground floor (e.g., Government or Services), exiting the building as final demand. Table \ref{tab:sector_mapping} presents the aggregation mapping from the 71 original BEA (Bureau of Economic Analysis) input-output sectors to our 10 broad economic sectors. The broad sectors are deliberately numbered and ordered from the top floor [10] (most upstream) down to the ground floor [01] (most downstream), reflecting the descending sequence of our hybrid calibration heuristic.
\begin{table}[t!]
\centering
\caption{Mapping of 71 Disaggregated Sectors to 10 Broad Sectors (Ordered by Upstreamness)}
\label{tab:sector_mapping}
\scriptsize
\begin{tabular}{p{5.5cm} p{10cm}}
\toprule
\textbf{10 Broad Sectors} & \textbf{Original 71 Sectors} \\
\midrule
\textbf{[10] Manufacturing} & [15] Apparel and leather and allied products; [16] Furniture and related products; [17] Miscellaneous manufacturing; [24] Printing and related support activities; [30] Other transportation equipment; [34] Electrical equipment, appliances, and components; [35] Nonmetallic mineral products; [36] Textile mills and textile product mills; [39] Wood products; [46] Machinery; [49] Computer and electronic products; [50] Food and beverage and tobacco products; [51] Plastics and rubber products; [55] Paper products; [57] Motor vehicles, bodies and trailers, and parts; [60] Fabricated metal products; [64] Petroleum and coal products; [67] Primary metals; [71] Chemical products \\
\addlinespace
\textbf{[09] Services} & [03] Social assistance; [04] Hospitals; [05] Nursing and residential care facilities; [07] Amusements, gambling, and recreation industries; [08] Ambulatory health care services; [10] Educational services; [19] Accommodation; [27] Performing arts, spectator sports, museums, and related activities; [28] Waste management and remediation services; [40] Food services and drinking places; [43] Computer systems design and related services; [44] Legal services; [45] Other services, except government; [62] Management of companies and enterprises; [65] Administrative and support services; [69] Miscellaneous professional, scientific, and technical services \\
\addlinespace
\textbf{[08] Finance, Insurance, Real Estate} & [01] Housing; [13] Funds, trusts, and other financial vehicles; [53] Rental and leasing services and lessors of intangible assets; [56] Securities, commodity contracts, and investments; [61] Federal Reserve banks, credit intermediation, and related activities; [63] Insurance carriers and related activities; [66] Other real estate \\
\addlinespace
\textbf{[07] Wholesale \& Retail Trade} & [06] Food and beverage stores; [09] General merchandise stores; [11] Motor vehicle and parts dealers; [25] Other retail; [70] Wholesale trade \\
\addlinespace
\textbf{[06] Mining} & [20] Support activities for mining; [42] Mining, except oil and gas; [68] Oil and gas extraction \\
\addlinespace
\textbf{[05] Transportation \& Warehousing} & [12] Water transportation; [14] Transit and ground passenger transportation; [18] Pipeline transportation; [26] Air transportation; [31] Rail transportation; [33] Warehousing and storage; [47] Truck transportation; [52] Other transportation and support activities \\
\addlinespace
\textbf{[04] Agriculture, Forestry, Fishing} & [32] Forestry, fishing, and related activities; [54] Farms \\
\addlinespace
\textbf{[03] Information} & [21] Publishing industries, except internet (includes software); [22] Motion picture and sound recording industries; [38] Data processing, internet publishing, and other information services; [58] Broadcasting and telecommunications \\
\addlinespace
\textbf{[02] Construction \& Utilities} & [48] Construction; [59] Utilities \\
\addlinespace
\textbf{[01] Government} & [02] Federal general government (defense); [23] State and local general government; [29] Federal government enterprises; [37] Federal general government (nondefense); [41] State and local government enterprises \\
\bottomrule
\end{tabular}
\end{table}


\subsubsection*{Data availability}
The replication package, including the dataset and Python scripts used in this study, is available in the Zenodo repository at \url{https://doi.org/10.5281/zenodo.22789532}.

\subsubsection*{Funding}
This research was funded by the Japan Science and Technology Agency (JST) Social Scenario Research Program toward a carbon-neutral society (grant number JPMJCN2302).



\bibliographystyle{apalike}
\raggedright
\bibliography{bibfile}

@article{Li2021,
  title={{Ripple effect in the supply chain network: Forward and backward disruption propagation, network health and firm vulnerability}},
  author={Li, Yuhong and Chen, Kedong and Collignon, Stephane and Ivanov, Dmitry},
  journal={European Journal of Operational Research},
  volume={291},
  number={3},
  pages={1117--1131},
  year={2021},
  publisher={Elsevier},
  url={https://doi.org/10.1016/j.ejor.2020.09.053}
}

@article{Ivanov2024,
  title={{Supply chain resilience: Conceptual and formal models drawing from immune system analogy}},
  author={Ivanov, Dmitry},
  journal={Omega},
  volume={127},
  pages={103081},
  year={2024},
  publisher={Elsevier},
  url={https://doi.org/10.1016/j.omega.2024.103081}
}

@article{Pazoki2024,
  title={{Increasing supply chain resiliency through equilibrium pricing and stipulating transportation quota regulation}},
  author={Pazoki, Mostafa and Samarghandi, Hamed and Behroozi, Mehdi},
  journal={Omega},
  volume={127},
  pages={103097},
  year={2024},
  publisher={Elsevier},
  url={https://doi.org/10.1016/j.omega.2024.103097}
}

@article{Zhan2025,
  title={{Supply chain network viability: Managing disruption risk via dynamic data and interaction models}},
  author={Zhan, Sha-lei and Ignatius, Joshua and Ng, Chi To and Chen, Daqiang},
  journal={Omega},
  volume={134},
  pages={103303},
  year={2025},
  publisher={Elsevier},
  url={https://doi.org/10.1016/j.omega.2025.103303}
}

@article{Gaggl2026,
    author = {Gaggl, Paul and Gorry, Aspen and vom Lehn, Christian},
    title = {Structural Change in Production Networks and Economic Growth},
    journal = {The Review of Economic Studies},
    pages = {rdag075},
    year = {2026},
    month = {07},
    issn = {0034-6527},
    memo = {10.1093/restud/rdag075},
    url = {https://doi.org/10.1093/restud/rdag075},
    memo = {https://academic.oup.com/restud/advance-article-pdf/doi/10.1093/restud/rdag075/68739340/rdag075.pdf},
}

@article{Kopytov2024,
author = {Alexandr Kopytov AND Bineet Mishra AND Kristoffer Nimark AND Mathieu Taschereau-Dumouchel},
title = {{Endogenous Production Networks under Supply Chain Uncertainty}},
journal = {Econometrica},
volume = {92},
number = {5},
pages = {1621-1659},
url = {https://doi.org/10.3982/ECTA20629},
memo = {https://onlinelibrary.wiley.com/doi/abs/10.3982/ECTA20629},
memo = {https://onlinelibrary.wiley.com/doi/pdf/10.3982/ECTA20629},
year = {2024}
}

@article{LiuTsyvinski2023,
    author = {Liu, Ernest and Tsyvinski, Aleh},
    title = {{A Dynamic Model of Input-Output Networks}},
    journal = {The Review of Economic Studies},
    volume = {91},
    number = {6},
    pages = {3608-3644},
    year = {2024},
    month = {11},
    issn = {0034-6527},
    memo = {10.1093/restud/rdae012},
    url = {https://doi.org/10.1093/restud/rdae012},
    eprint = {https://academic.oup.com/restud/article-pdf/91/6/3608/60441701/rdae012.pdf},
}

@article{Chenery1949,
  author  = {Chenery, Hollis B.},
  title   = {{Engineering Production Functions}},
  journal = {The Quarterly Journal of Economics},
  year    = {1949},
  volume  = {63},
  number  = {4},
  pages   = {507--531},
  month   = nov,
  url     = {https://doi.org/10.2307/1882136}
}

@article{Gabaix2011,
	Author = {Gabaix, Xavier},
	Doi = {10.3982/ECTA8769},
	Journal = {Econometrica},
	Memo = {https://onlinelibrary.wiley.com/doi/abs/10.3982/ECTA8769},
	Number = {3},
	Pages = {733-772},
	Title = {{The Granular Origins of Aggregate Fluctuations}},
	Volume = {79},
	Year = {2011},
	url = {https://dx.doi.org/10.3982/ECTA8769}}

@article{Antras2012,
  title={{Measuring the upstreamness of production and trade in the global economy}},
  author={Antr{\`a}s, Pol and Chor, Davin and Fally, Thibault and Hillberry, Russell},
  journal={American Economic Review},
  volume={102},
  number={1},
  pages={412--433},
  year={2012},
 url={https://doi.org/10.1257/aer.102.3.412},
  publisher={American Economic Association}
}

@article{Dietzenbacher2007,
  title={{Production chains in an interregional framework: Identification by means of average propagation lengths}},
  author={Dietzenbacher, Erik and Romero, Isidoro},
  journal={International Regional Science Review},
  volume={30},
  number={4},
  pages={362--383},
  year={2007},
 url={https://doi.org/10.1177/0160017607305366},
  publisher={Sage Publications Sage CA: Los Angeles, CA}
}

@book{Krasnoselskii1964,
	Author = {Mark A. Krasnosel'ski\u{\i}},
	Issn = {0073-8442},
	Publisher = {Groningen, P. Noordhoff},
	Title = {{Positive Solutions of Operator Equations}},
	Year = {1964}}

@article{Kennan2001,
	Author = {John Kennan},
	Doi = {10.1006/redy.2001.0133},
	Issn = {1094-2025},
	Journal = {Review of Economic Dynamics},
	Number = {4},
	Pages = {893 - 899},
	Title = {{Uniqueness of Positive Fixed Points for Increasing Concave Functions on {R}n: An Elementary Result}},
	Volume = {4},
	Year = {2001},
	url = {http://dx.doi.org/10.1006/redy.2001.0133}}

@incollection{Diewert1974,
  author    = {Diewert, W. Erwin},
  title     = {{Applications of Duality Theory}},
  booktitle = {Frontiers of Quantitative Economics},
  volume    = {II},
  editor    = {Intriligator, Michael D. and Kendrick, David A.},
  pages     = {106--176},
  publisher = {North-Holland},
  address   = {Amsterdam},
  year      = {1974}
}

@article{NakanoNishimura2026,
  author  = {Nakano, Satoshi and Nishimura, Kazuhiko},
  title   = {{Nonlinear {Domar} aggregation over transforming production networks}},
  journal = {Journal of Evolutionary Economics},
  year    = {2026},
  volume  = {36},
  number  = {2},
  pages   = {58},
  memo     = {10.1007/s00191-026-00975-4},
  url     = {https://doi.org/10.1007/s00191-026-00975-4},
  issn    = {1432-1386}
}

@article{NakanoNishimura2024, 
title={{The elastic origins of tail asymmetry}}, volume={28}, url={https://doi.org/10.1017/S1365100523000172}, number={3}, journal={Macroeconomic Dynamics}, author={Nakano, Satoshi and Nishimura, Kazuhiko}, year={2023}, pages={591-611}}

@article{NakanoNishimura2018,
	Author = {Satoshi Nakano and Kazuhiko Nishimura},
	memo = {http://doi.org/10.1016/j.physa.2018.08.110},
	Issn = {0378-4371},
	Journal = {Physica A: Statistical Mechanics and its Applications},
	Memo = {http://www.sciencedirect.com/science/article/pii/S0378437118310537},
	Pages = {986 - 999},
	Title = {{Structural propagation in a production network with restoring substitution elasticities}},
	Volume = {512},
	Year = {2018},
	url = {https://dx.doi.org/10.1016/j.physa.2018.08.110}}

@article{NakanoNishimura2021,
	Author = {Satoshi Nakano and Kazuhiko Nishimura},
	memo = {10.1016/j.jmacro.2020.103216},
	Issn = {0164-0704},
	Journal = {Journal of Macroeconomics},
	Memo = {https://www.sciencedirect.com/science/article/pii/S0164070420301427},
	Pages = {103216},
	Title = {{Productivity propagation with networks transformation}},
	Volume = {67},
	Year = {2021},
	url = {https://dx.doi.org/10.1016/j.jmacro.2020.103216}}

@article{Acemoglu2012,
	Author = {Acemoglu, Daron and Carvalho, Vasco M. and Ozdaglar, Asuman and Tahbaz-Salehi, Alireza},
	memo = {10.3982/ECTA9623},
	Journal = {Econometrica},
	Number = {5},
	Pages = {1977-2016},
	Title = {The Network Origins of Aggregate Fluctuations},
	Volume = {80},
	Year = {2012},
	url = {https://dx.doi.org/10.3982/ECTA9623}}

@article{Acemoglu2017,
	Author = {Acemoglu, Daron and Ozdaglar, Asuman and Tahbaz-Salehi, Alireza},
	Doi = {10.1257/aer.20151086},
	Journal = {American Economic Review},
	Memo = {http://www.aeaweb.org/articles?id=10.1257/aer.20151086},
	Month = {January},
	Number = {1},
	Pages = {54-108},
	Title = {{Microeconomic Origins of Macroeconomic Tail Risks}},
	Volume = {107},
	Year = {2017},
	url = {https://dx.doi.org/10.1257/aer.20151086}}

@article{BaqaeeFarhi2019_2, 
	Author = {Baqaee, David Rezza and Farhi, Emmanuel},
	memo = {10.3982/ECTA15202},
	Journal = {Econometrica},
	Number = {4},
	Pages = {1155-1203},
	Title = {{The Macroeconomic Impact of Microeconomic Shocks: Beyond Hulten's Theorem}},
	Volume = {87},
	Year = {2019},
	url = {https://dx.doi.org/10.3982/ECTA15202}}

@book{Dixon2013,
     editor = {Dixon, Peter B. and Jorgenson, Dale W.},
  title     = {Handbook of Computable General Equilibrium Modeling},
  volume    = {1A--1B},
  year      = {2013},
  publisher = {North-Holland}
}

@book{Jorgenson1987,
  author    = {Jorgenson, Dale W. and Gollop, Frank M. and Fraumeni, Barbara M.},
  title     = {{Productivity and U.S. Economic Growth}},
  series    = {Harvard Economic Studies},
  volume    = {159},
  year      = {1987},
  publisher = {Harvard University Press},
  address   = {Cambridge, MA}
}

@article{Timmer2015,
author = {Timmer, Marcel P. and Dietzenbacher, Erik and Los, Bart and Stehrer, Robert and de Vries, Gaaitzen J.},
title = {{An Illustrated User Guide to the World Input--Output Database: the Case of Global Automotive Production}},
journal = {Review of International Economics},
volume = {23},
number = {3},
pages = {575-605},
url = {https://doi.org/10.1111/roie.12178},
memo = {https://onlinelibrary.wiley.com/doi/abs/10.1111/roie.12178},
eprint = {https://onlinelibrary.wiley.com/doi/pdf/10.1111/roie.12178},
year = {2015}
}

@incollection{Dawkins2001,
title = {{Chapter 58 - Calibration}},
editor = {James J. Heckman and Edward Leamer},
series = {Handbook of Econometrics},
publisher = {Elsevier},
volume = {5},
pages = {3653-3703},
year = {2001},
issn = {1573-4412},
url = {https://doi.org/10.1016/S1573-4412(01)05011-5},
memo = {https://www.sciencedirect.com/science/article/pii/S1573441201050115},
author = {Christina Dawkins and T.N. Srinivasan and John Whalley}
}

\end{document}